\newtheorem{theorem}{Theorem}
\newtheorem{definition}{Definition}
\newtheorem{property}{property}
\title{Privacy-preserving Security Inference Towards Cloud-Edge Collaborative Using Differential Privacy
\thanks{\textit{\underline{Corresponding author}}: 
\textbf{Xingshu Chen}} 
}
\author{
  Yulong Wang, Xingshu Chen, Qixu Wang\\
  School of Cyber Science and Engineering \\
  Sichuan University \\
  Chengdu\\
  \texttt{wangyulonga@gmail.com} \\
  \texttt{\{chenxsh, qixuwang\}@scu.edu.cn} \\
}
\begin{document}
\maketitle

\begin{abstract}
Cloud-edge collaborative inference approach splits deep neural networks (DNNs) into two parts that run collaboratively on resource-constrained edge devices and cloud servers, aiming at minimizing inference latency and protecting data privacy. However, even if the raw input data from edge devices is not directly exposed to the cloud, state-of-the-art attacks targeting collaborative inference are still able to reconstruct the raw private data from the intermediate outputs of the exposed local models, introducing serious privacy risks. In this paper, a secure privacy inference framework for cloud-edge collaboration is proposed, termed CIS, which supports adaptively partitioning the network according to the dynamically changing network bandwidth and fully releases the computational power of edge devices. To mitigate the influence introduced by private perturbation, CIS provides a way to achieve differential privacy protection by adding refined noise to the intermediate layer feature maps offloaded to the cloud. Meanwhile, with a given total privacy budget, the budget is reasonably allocated by the size of the feature graph rank generated by different convolution filters, which makes the inference in the cloud robust to the perturbed data, thus effectively trade-off the conflicting problem between privacy and availability. Finally, we construct a real cloud-edge collaborative inference computing scenario to verify the effectiveness of inference latency and model partitioning on resource-constrained edge devices. Furthermore, the state-of-the-art cloud-edge collaborative reconstruction attack is used to evaluate the practical availability of the end-to-end privacy protection mechanism provided by CIS.
\end{abstract}

\keywords{Edge Computing \and Differential Privacy \and Cloud-Edge Collaborative}
\section{Introduction}
With the advent of the Internet of Everything and the fifth-generation communication era, the decentralized and fragmented data generated by network edge devices is growing exponentially, and the demand for data transmission bandwidth is increasing\cite{mao2017survey}. Meanwhile, new scenarios such as industrial Internet and autonomous driving have created new demands for real-time data processing and security and privacy that traditional centralized cloud computing architectures can no longer effectively address\cite{wang2019ecass,kuang2021cooperative}. To address these challenges, edge computing has emerged with the advantage of being closer to the data side, sinking part of the computing and storage tasks from the center to the edge and breaking the bottleneck of the traditional network schemes through cloud-edge collaboration \cite{siriwardhana2021survey}.

Nevertheless, deploying some computationally intensive tasks in resource-constrained edge devices still faces significant computational latency and energy consumption. Kang et al. designed a lightweight cloud-edge collaborative inference framework, Neurosurgeon\cite{kang2017neurosurgeon}, to fine-grainedly partition the DNN network for the variation of data size and computation in each layer of the neural network. As shown in Figure \ref{fig1} for the AlexNet network as an example, the latency and output data size of each layer exhibit a large heterogeneity, which means that layers with higher latency may not necessarily output a larger amount of data. Based on this observation, Neurosurgeon reduces the total end-to-end execution latency as well as the energy consumption of edge devices by dividing the DNN into two parts, and offloading the computationally intensive part to the server at a lower transmission cost (only the intermediate results of the layer where the partition point is located need to be transmitted). Subsequently, more and more research on collaborative inference based on model partitioning \cite{li2019edge,zhang2021autodidactic,hu2019dynamic} have been proposed to further improve the performance and efficiency of such approaches, while ignoring the potential security and privacy issues.

The work of He et al. demonstrates the feasibility of data privacy attacks against cloud-edge collaborative inference systems \cite{he2020attacking}, even if the cloud only receives intermediate results offloaded by edge devices instead of the original data. The untrustworthy cloud can still easily and accurately recover sensitive original data from intermediate results by means of white-box and black-box attacks. In addition, the overfitting of the model also provides a side channel of data privacy leakage for cloud-edge collaborative inference \cite{yeom2018privacy}. The membership inference attack proposed by Shokri et al. \cite{shokri2017membership} can be exploited by malicious edge nodes to infer whether a certain data exists in the training set by querying the results of the black-box inference service.

To address these challenges, cryptography-based protocols \cite{liu2017oblivious,gilad2016cryptonets} have also been proposed to protect data privacy in the inference phase. However, complex and frequent cryptographic computations introduce significant computational and communication overheads while preserving privacy, which are infeasible to deploy on devices with constrained computational resources such as IoT devices. In addition to this, some scholars have leveraged differential privacy as a lightweight privacy-preserving strategy to achieve privacy preservation in machine learning by adding quantifiable noise to the model or output results through provable mechanisms. Among them, Mireshghallah et al. proposed the Cloak framework \cite{mireshghallah2020principled}, which maximizes privacy by employing an optimized Laplace distribution for obfuscation before sending the privacy data to the cloud, which maximizes privacy by minimizing the mutual information between the original input and the data sent to the cloud. However, direct perturbation of the original image can significantly damage the availability of the image, which in turn degrades the inference accuracy severely. Wang et al.\cite{wang2018not} proposed to add fine-calibrated noise to the intermediate output to achieve a differential privacy-preserving framework, which designs a noise training method to mitigate the impact of noise perturbation on inference accuracy. Similar to this work, an end-to-end collaborative inference privacy-preserving framework, Shredder\cite{mireshghallah2020shredder}, proposed by Mireshghallah et al. significantly reduces the amount of information in the communication data by learning to add noise to the distribution without changing the structure and weights of the pre-trained network, while being able to maintain inference accuracy.

In summary, the above studies do not reasonably combine cloud-edge collaborative inference with privacy-preserving mechanisms to effectively trade-off the conflicting problems between computational latency, privacy and availability. Therefore, in this paper, we propose a novel cloud-edge collaborative security inference framework, CIS (\underline{C}ollaborative \underline{I}nference \underline{S}hield), which aims to maximize the privacy strength with minimal impact on DNN accuracy, while being able to effectively tradeoff between usability and privacy of cloud-edge collaborative inference. The main contributions are summarized as follows.

\begin{itemize}
    \item A secure privacy inference framework CIS for cloud-edge collaboration is proposed. CIS supports adaptively partitioning the network for collaborative inference based on dynamically changing network bandwidth, aiming to fully release the computational power of edge devices. Meanwhile, the selection of a partition point fully considers the amount of information in offloading intermediate data and can effectively trade-off the total inference latency with the privacy of sensitive data at the edge.
    
    \item CIS provides a way to achieve differential privacy protection by adding refined noise to the intermediate layer feature maps offloaded to the cloud. Meanwhile, with a given total privacy budget, the budget is reasonably allocated by the size of the feature graph rank generated by different convolution filters, which makes the inference in the cloud robust to the perturbed data, thus effectively trade-off the conflicting problem between privacy and availability.
    
    \item We construct realistic cloud-edge collaborative inference computing scenarios to evaluate the effectiveness of inference latency and model partitioning on resource-constrained edge devices. Also, state-of-the-art cloud-edge collaborative reconstruction attacks from internal and external adversaries are used to evaluate the practical usability of the end-to-end privacy protection mechanisms provided by CIS.
\end{itemize}

 \begin{figure}
 \centering
  \begin{subfigure}
  \centering
  \includegraphics[width=0.7\linewidth]{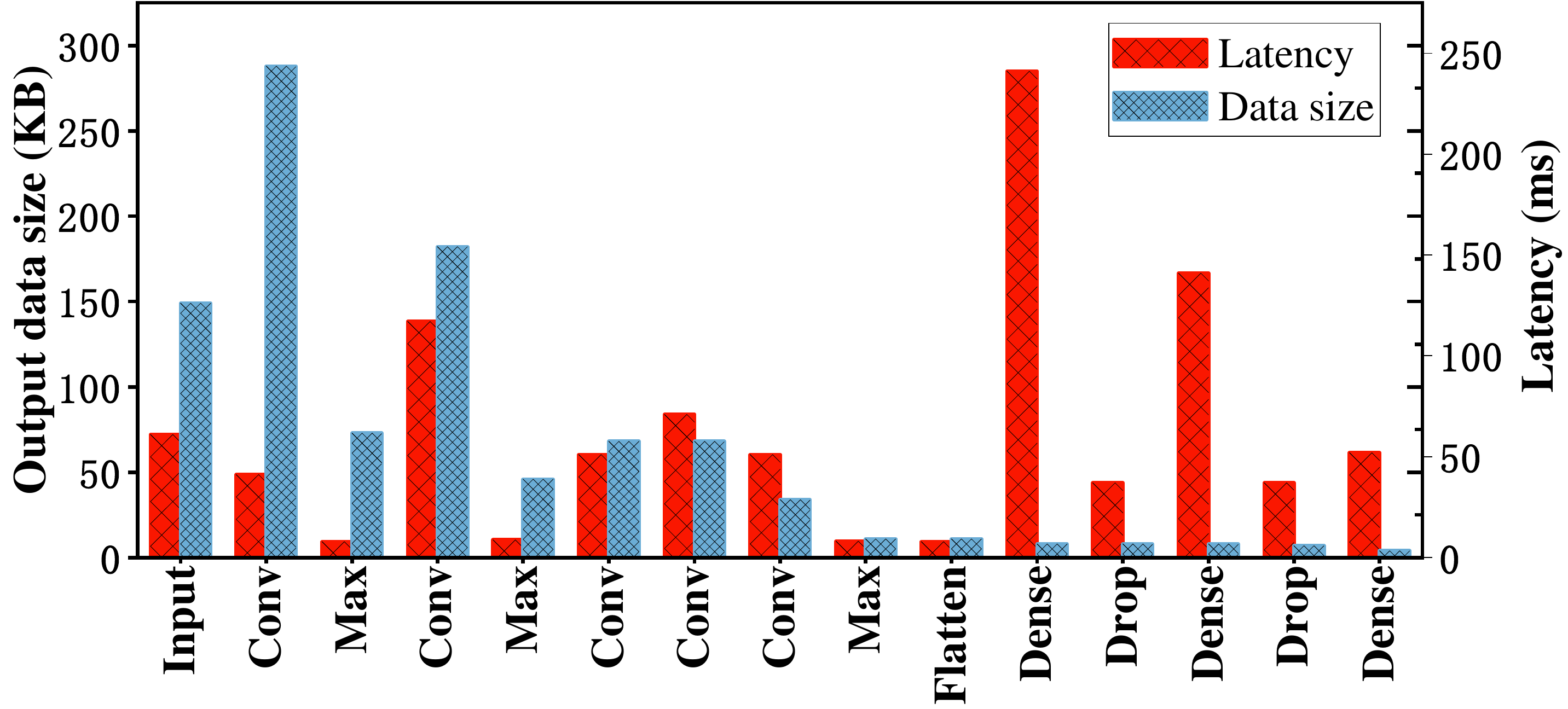}
  \caption{Output data size and execution delay for each layer in AlexNet network.}
  \label{fig1}
  \end{subfigure}
\end{figure}

\section{Preliminaries and Related Works}
\subsection{Differential privacy}
Differential Privacy (DP) \cite{dwork2006calibrating}, proposed by Dwork in 2006 as an alternative reliable privacy model, has been considered as a promising privacy-preserving strategy in machine learning in recent years.The definition of DP is based on a rigorous theoretical foundation that Privacy preservation in machine learning is achieved by adding quantifiable noise to the model or output results through provable mechanisms, and an elegant trade-off between privacy strength and usability can be made by adjusting the privacy budget \cite{zhao2022survey}.

Differential privacy guarantees that queries and accesses of any random algorithm on two adjacent datasets have similar output distributions, and an attacker cannot infer private information about an individual in the results of any query. A formal definition of differential privacy is shown below:
\begin{definition} ($\epsilon$-Differential Privacy {\rm \cite{dwork2006calibrating}}, $\epsilon$-DP)
For any two adjacent data sets $D$ and $D^{'}$, given a randomized mechanism algorithm $\mathcal{M}$ running on this adjacent data set, $S$ is a subset of all possible outputs generated by the $\mathcal{M}$ mechanism, when the following inequality holds
\begin{equation}
\label{eq-2-1}
Pr\left [ \mathcal{M}\left ( D\right ) \in S \right ]\leqslant e^{\epsilon}Pr\left [ \mathcal{M}\left ({ D}'\right ) \in S \right ]
\end{equation}
\label{def-2-1}
\noindent Then the random mechanism $\mathcal{M}$ is claimed to satisfy $\epsilon$-differential privacy, where the parameter $\epsilon$ is the privacy-preserving budget, and the smaller the $\epsilon$, the higher the privacy-preserving strength.

\end{definition}
Before introducing the specific privacy protection mechanism, we give the definition of global sensitivity:
\begin{definition} (global sensitivity {\rm \cite{dwork2014algorithmic}})
With a query function $f:\mathbb{N}^{\left |\mathcal{X} \right |}\rightarrow \mathbb{R}^{k}$ for any two adjacent data sets $D$ and $D^{'}$, the global sensitivity is defined as
\begin{equation}
\label{eq-2-2}
\triangle s\left ( f,\left \| \cdot  \right \|\right )= \max\limits_{d\left ( {D,{D}'}\right )=1}\left \|f\left ( D\right )-f\left ( {D}'\right ) \right \|
\end{equation}
\label{def-2-2}
\noindent where $\left \| \cdot \right \|$ is the distance metric, usually $l_{1}$ and $l_{2}$ norm.

\end{definition}

Based on the above mentioned definitions of differential privacy and global sensitivity, The Laplace Mechanism implements the $\epsilon$-differential privacy-preserving mechanism by adding random noise obeying the Laplace distribution ($Lap\left ( x\mid b\right )=\frac{1}{2b}exp\left ( -\frac{\left | x\right |}{b}\right )$) to the query result, as defined below.

\begin{definition}(Laplace mechanism {\rm\cite{dwork2014algorithmic}}
Given any query function $f:\mathbb{N}^{\left |\mathcal{X} \right |}\rightarrow \mathbb{R}^{k}$ with global $l_{1}$ sensitivity $\triangle s$, the Laplace mechanism is defined as

\begin{equation}
\label{eq-2-3}
\mathcal{M}_{L}\left ( \left ( x,f\left ( \cdot\right ),\epsilon \right )\right )=f\left ( x\right )+\left ( Y_{1},Y_{2},\cdots,Y_{k}\right )
\end{equation}
\label{def-2-3}
\noindent where $Y_{i}$ is drawn from $Lap\left(\frac{\triangle s}{\epsilon}\right)$ of independent identically distributed random variables, and the Laplace mechanism satisfies $\epsilon$-differential privacy.
\end{definition}

In addition to this, differential privacy has a very important property, namely the post-processing property \cite{dwork2014algorithmic}. After any processing of the output of a randomized algorithm satisfying differential privacy, the post-processing property still guarantees privacy protection of the same privacy strength. This property enables the application and generalization of differential privacy to complex machine learning algorithms with the following properties:
\begin{property}
If the random mechanism $\mathcal{M}:\mathbb{N}^{\left |\mathcal{X} \right |}\rightarrow R$ satisfies $\left(\epsilon,\delta\right)$-DP. Let $f:\mathcal{R}\rightarrow \mathcal{{R}'}$ be an arbitrary random mapping, then $f\circ \mathcal{M}:\mathbb{N}^{\left |\mathcal{X} \right |}\rightarrow {R}'$ still satisfies $\left(\epsilon,\delta\right) $-differential privacy.
\label{pro-2-1}
\end{property}

In fact, a complex computational task often does not correspond to a complex privacy-preserving mechanism, but rather the privacy budget is rationally allocated to the various steps of the complex task. The combination theorem of differential privacy gives a solution in this case for computing the privacy-preserving strength and performance of the entire complex computational task.

\begin{theorem} (Parallel combination theorem {\rm \cite{mcsherry2009privacy}})
Suppose that given a data set $D$ with $k$ mutually disjoint subset divisions $\left\{D_{1},D_{2},\cdots,D_{k}\right\}$, the privacy mechanism $\mathcal{M}_{i}:\mathbb{N}^{\left |\mathcal{X} \right |}\rightarrow \mathcal{R}_{i}$ is performed separately for each subset, satisfying $\epsilon_{i}$-differential privacy for any $i\in \left [ k \right ]$. Then the mechanism $\mathcal{M}_{\left [ k\right ]}=\left \{ \mathcal{M}_{1}\left(D_{1}\right),\mathcal{M}_{2}\left(D_{2}\right),\cdots,\mathcal{M}_{k }\left(D_{k}\right)\right \}:\mathbb{N}^{\left |\mathcal{X} \right |}\rightarrow \sum_{i=1}^{k}\mathcal{R}_{i}$ is satisfying $max \epsilon_{i}$-difference Privacy.
\label{theo-2-1}
\end{theorem}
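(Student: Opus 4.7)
The plan is to reduce the parallel composition of $k$ mechanisms on disjoint data partitions to a single-mechanism privacy bound by exploiting independence and the fact that a single record change only touches one subset. First, I would fix two adjacent datasets $D$ and $D'$ with $d(D, D')=1$, meaning they differ in exactly one record $r$. Because the partition $\{D_1,\ldots,D_k\}$ is a disjoint decomposition, the record $r$ lies in exactly one block; let $j$ be its index. Then for every $i \neq j$ we have $D_i = D_i'$, whereas $D_j$ and $D_j'$ are themselves adjacent in the sense of Definition~\ref{def-2-1}.

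Next, I would use the fact that the mechanisms $\mathcal{M}_1,\ldots,\mathcal{M}_k$ operate independently on their respective subsets, so the joint output distribution of $\mathcal{M}_{[k]}$ factorizes. Taking any measurable product event $S = S_1 \times S_2 \times \cdots \times S_k \subseteq \prod_i \mathcal{R}_i$, I would write
\begin{equation*}
\Pr[\mathcal{M}_{[k]}(D) \in S] \;=\; \prod_{i=1}^{k} \Pr[\mathcal{M}_i(D_i) \in S_i].
\end{equation*}
For the $k-1$ factors with $i \neq j$, the ratio against the corresponding term for $D'$ equals $1$ since $D_i = D_i'$. For the remaining $i=j$ factor, $\epsilon_j$-DP of $\mathcal{M}_j$ on the adjacent pair $(D_j, D_j')$ yields a ratio of at most $e^{\epsilon_j}$. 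Multiplying the bounds gives
\begin{equation*}
\Pr[\mathcal{M}_{[k]}(D) \in S] \;\leq\; e^{\epsilon_j}\Pr[\mathcal{M}_{[k]}(D') \in S] \;\leq\; e^{\max_i \epsilon_i}\Pr[\mathcal{M}_{[k]}(D') \in S],
\end{equation*}
which is the $\max_i \epsilon_i$-DP guarantee claimed in the theorem.

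I expect the main subtlety to be justifying that it suffices to verify the inequality on product events rather than on arbitrary measurable subsets of $\sum_i \mathcal{R}_i$. I would handle this by appealing to a standard monotone class / $\pi$-$\lambda$ argument: product rectangles generate the product $\sigma$-algebra, and the pointwise density ratio bound extends from rectangles to general measurable sets by linearity and monotone convergence. A minor secondary obstacle is being careful with the direction of adjacency: since differential privacy is symmetric in $D$ and $D'$, the argument for $D' \mapsto D$ is identical, so no additional case analysis is required. Everything else is a routine combination of independence and the hypothesis that each $\mathcal{M}_i$ already satisfies $\epsilon_i$-DP.
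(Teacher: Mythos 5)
Your proof is correct, but note that the paper itself does not prove this statement at all: Theorem~\ref{theo-2-1} is imported verbatim from McSherry's work \cite{mcsherry2009privacy} as a known tool (it is later invoked in the proof of Theorem~\ref{the-2-2}), so there is no in-paper argument to compare against. What you wrote is the standard proof of parallel composition: a single-record change affects exactly one block $D_j$ of the (data-independent) partition, independence of the $\mathcal{M}_i$ factorizes the joint output law, the $k-1$ unaffected factors contribute ratio $1$, and the affected factor contributes at most $e^{\epsilon_j}\le e^{\max_i\epsilon_i}$; symmetry of adjacency closes the other direction. The one place to be careful is your extension step from product rectangles to arbitrary measurable $S$: the class $\{A:\Pr[\mathcal{M}_{[k]}(D)\in A]\le e^{\max_i\epsilon_i}\Pr[\mathcal{M}_{[k]}(D')\in A]\}$ is not closed under complementation, so a bare $\pi$-$\lambda$ (Dynkin) argument on rectangles does not go through as stated. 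It does work if you apply the monotone class theorem to the \emph{algebra} of finite disjoint unions of rectangles (the inequality is preserved under finite disjoint unions and under monotone limits of sets for finite measures), or, more directly, if you bound the section probabilities $P_j\left(S_{x_{-j}}\right)\le e^{\epsilon_j}P_j'\left(S_{x_{-j}}\right)$ pointwise and integrate via Fubini; in the discrete setting relevant here one can simply sum over atoms. With that step stated precisely, your argument is a complete and standard proof of the cited theorem.
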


\subsection{Cloud-edge collaborative inference and privacy enhancement}
Due to the constraints of limited computing resources and energy consumption of edge devices, edge devices need to make decisions to offload a portion of computing tasks to the cloud to collaboratively complete tasks \cite{mach2017mobile}. The cloud-edge collaborative inference research proposed in recent years
research \cite{kang2017neurosurgeon,teerapittayanon2017distributed,ko2018edge,zhang2021deepslicing,zhang2021autodidactic} overcome the significant communication overhead and potential privacy leakage by cutting the model efficiently, where the former part of the computation remains on the edge device and the resulting intermediate computation results are offloaded to the cloud to complete the remaining computational tasks. Further, Hu et al \cite{hu2019dynamic} propose the model partitioning method for directed acyclic graph (DAG) topological DNN models, but it is inefficient for extensive DNN model partitioning. To address the shortcomings of offline partitioning methods, \cite{zhang2021deepslicing,banitalebi2021auto} proposed an adaptive online partitioning method to achieve better results by adjusting the partitioning strategy in real time. 
\cite{manasi2020neupart,xu2020energy} studied the energy consumption issue under collaborative inference based on model partitioning method to achieve the lowest energy consumption while guaranteeing the delay requirement.

However, the work of Zecheng He et al. \cite{he2020attacking,he2019model} shows that an untrusted cloud can still easily and accurately recover sensitive data from intermediate values in a limited attack background, without even accessing the edge model. Accordingly, many scholars have started to conduct a lot of research on security privacy protection for collaborative inference at the cloud edge. Accordingly, many scholars have started to conduct a lot of research on security privacy protection for collaborative inference at the cloud edge. The most representative studies are adding quantifiable noise satisfying differential privacy to the intermediate output values to achieve data privacy protection for edge devices \cite{mireshghallah2020shredder,wang2020differential,mao2018learning}. However, these perturbations eventually lead to degradation of model inference accuracy, and the balance between good privacy and usability still faces a great challenge.

 \begin{figure}
 \centering
  \begin{subfigure}
  \centering
  \includegraphics[width=0.7\linewidth]{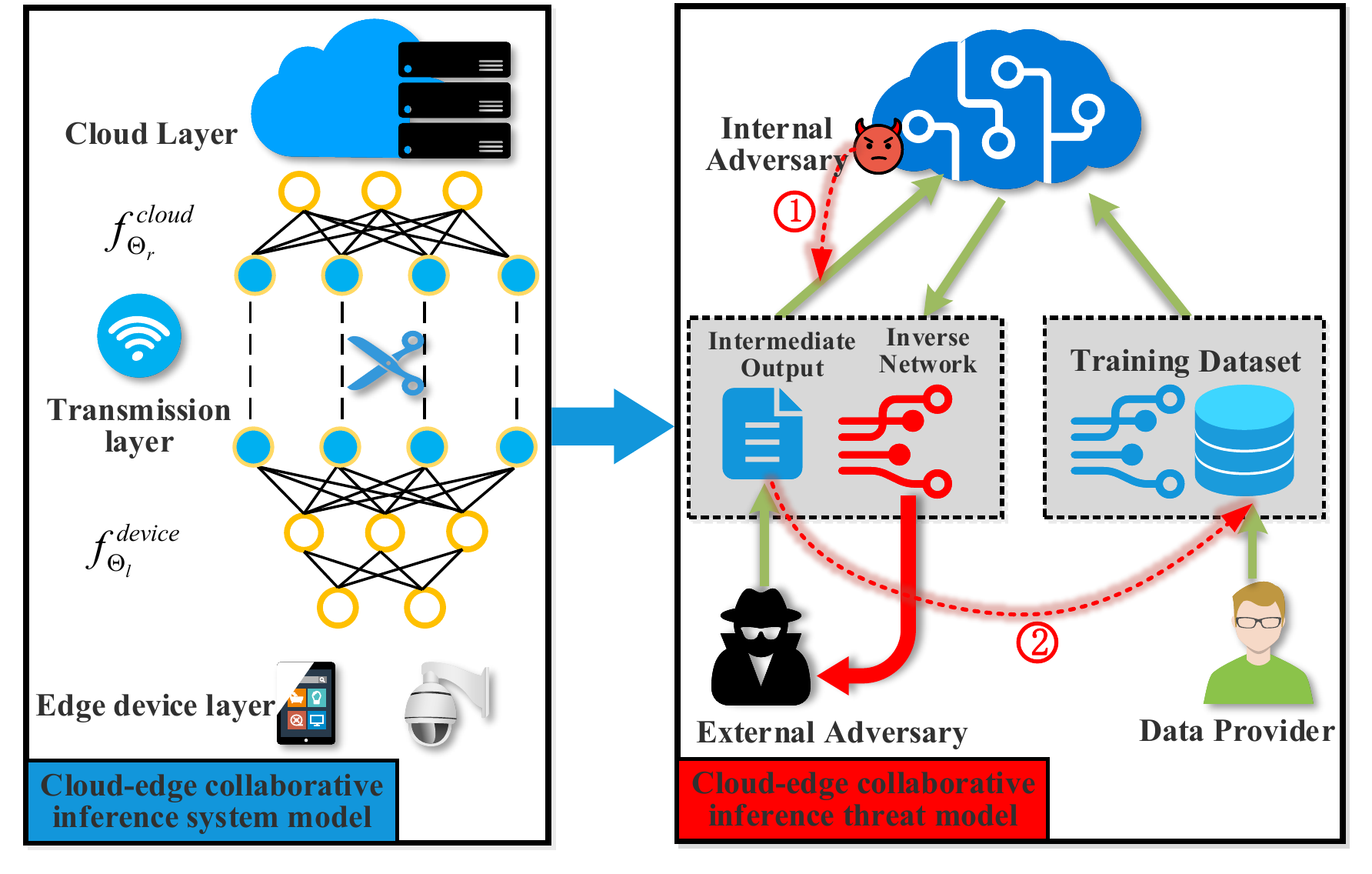}
  \caption{System model and threat model of CIS scheme.}
  \label{fig2}
  \end{subfigure}
\end{figure}

\section{Problem Statement}
This section analyzes and elaborates on the system model and the threat model faced by the CIS framework, and specifies the design goals of the solution in this section.
\subsection{System Model}
In this work, we consider how to tackle the conflicting problems between collaborative inference performance, privacy and availability in the scenario of cloud-side collaborative deep learning inference oriented. As shown in the left side of Figure. \ref{fig2}, the system model of CIS proposed in this work consists of three main components: Edge device layer $\mathbb{E}$, Cloud layer $\mathbb{C}$, and Transmission layer $\mathbb{T}$. Initially, in the collaborative inference architecture, the original $n$-layer DNN network $f_{\Theta} = \left \{ f_{\theta_1}\circ  f_{\theta_2}\cdots\circ  f_{\theta_n}\right \}$ is cut into two parts at layer $m$: $f^{device}_{\Theta_l}=\left \{f_{\theta_1}\circ f_{\theta_1}\cdots\circ f_{\theta_m} \right \}$ and $f^{cloud}_{\Theta_r}=\left \{f_{\theta_m+1}\circ f_{\theta_m+2}\cdots\circ f_{\theta_n} \right \}$, where $\mathbb{E}$ performs the first half of the DNN network and sends the intermediate computation result $v_m=f^{device}_{\Theta_l}\left(x\right)$ from layer $m$ to the remote $\mathbb{C}$ via $\mathbb{T}$. Subsequently, the $\mathbb{C}$ performs the remaining second half of the network computation and returns the final inference result $y=f^{cloud}_{\Theta_r}\left(v_m\right)$.

The selection of the DNN network partition point needs to consider many different factors to determine the optimal strategy. The most important performance metric is the overall inference delay, which mainly consists of the inference computation time of the edge devices, the inference computation time on the cloud, and the network transmission time. The CIS system model proposed in this work will consider the variation of network communication quality in real scenarios, the amount of computation in different layers, and the size of data output from different layers to select the optimal model cutting position to maximize the overall system performance. In addition, another issue that needs to be addressed in the CIS system model is the privacy of the data.

\subsection{Threat Model}
As illustrated in the right half of Figure. \ref{fig2}, the threat model for the collaborative cloud-edge deep learning inference service considered in this work involves three computational entities, namely the Edge device layer $\mathbb{E}$ and the Cloud service layer $\mathbb{C}$ (internal adversary), and the Malicious edge node $\mathbb{M}$ (external adversary), between which computations are connected by the intermediate output $v_m$ of the partition layer $m$. 

We assume that the $\mathbb{E}$ is trustworthy and its will not disclose any information to other computing entities voluntarily. However, the $\mathbb{C}$ is an honest-but-curious entity, i.e., although $\mathbb{C}$ is curious about the private data of the edge devices, it will strictly follow the predefined computation protocols and will not interfere with the whole collaborative inference computation process. Similarly, a malicious edge node $\mathbb{M}$ adheres strictly to the predefined computation protocols and is more restrictive, and does not have any prior knowledge except through legitimate query requests. However, these advanced adversaries from external and internal entities can still perform other computations to infer or reconstruct the input data privacy of the edge device \cite{he2020attacking}. Specifically, the model of internal and external threats to CIS and the associated assumptions are as follows:
\begin{itemize}
\item \textbf{White-box Reconstruction Attack, WRA: } It is assumed that $\mathbb{C}$ does not have any a priori information about the original privacy data $x$ except for the acquisition of the intermediate output $v$ of the partition layer uploaded by $\mathbb{E}$. In addition, the cloud-edge collaborative inference service needs to share the same deep inference network, so it is also assumed that $\mathbb{C}$ has information about the network structure and parameters $f_{\Theta}$ of the model. The white-box reconstruction attack can be formally defined as $\hat{x}=WRA\left(f^{device}_{\Theta_l}\left(x\right),f_{\Theta}\right)$.

\item \textbf{Black-box Inverse-Network attack, BINA: } Assume that $\mathbb{M}$ does not have any a priori information about the original private data $x$, as well as no a priori information about the network and parameters $f_{\Theta}$ of the model, other than the ability to use the cloud-edge collaborative inference service through legitimate query requests normally. However, the intermediate data uploaded to the edge devices are available via bypass. the BINA attack can be formally defined as $\hat{x}=f^{-1}_{\Theta_l}\left(f^{device}_{\Theta_l}\left(x\right)\right)$ by training an inverse model $f^{-1 }_{\Theta_l}$ to reconstruct the input data from the intermediate results.

\end{itemize}

\subsection{Design Objectives}
According to the system model and threat model proposed above, the goal of CIS is to design and implement a cloud-edge collaborative inference framework with high privacy, high precision and low latency, which can resist advanced threats from internal and external. Combined with security requirements, the specific design goals of CIS are as follows:

\textbf{(1) Inference delay:} CIS supports adaptively cutting the network to achieve collaborative inference according to the dynamically changing network bandwidth in order to fully utilize the computing power of edge devices and minimize the total delay of cloud-side collaborative inference.

\textbf{(2) Privacy}: CIS supports adding refined noise based on differential privacy to effectively resist advanced black-box and white-box reconstruction attacks, and the impact of privacy protection on model accuracy is within a reasonable range.

\textbf{(3) Usability}: The protection mechanism and model partitioning mechanism of CIS can be easily applied to existing deep networks without any structural and parametric modification to the network model.
\section{ Proposed Method}

\subsection{Cloud-Edge collaborative inference acceleration based on model partitioning} %
To formally define the model partition problem, the original DNN model can be transformed into a directed linked list $\mathcal{L}=\left \{\mathcal{V},\mathcal{E} \right \}$, where $\mathcal{V}=\left \{\mathscr{l}_1,\mathscr{l}_2,\cdots,\mathscr{l}_n\right \}$ denotes the definition of each layer of the DNN model as the set of vertices in $\mathcal{L}$, $\mathcal{E}$ denotes the set of dependencies between layers, and $\left \langle \mathscr{l}_i,\mathscr{l}_j \right \rangle \in \mathcal{E}$ denotes that the computed output data of $\mathscr{l}_i$ layer will be transferred to $\mathscr{l}_j$ layer as input. On the basis of the formal definition of DNN as a directed linked list $\mathcal{L}$, we will give some other important definitions.

\begin{definition} (Model partitioning problem)

\noindent Given a $n$-layer DNN model with a directed chain list $\mathcal{L}=\left \{\mathcal{V},\mathcal{E} \right \}$, the model partitioning problem can be defined as taking some vertex $\mathscr{l}_m$ in $\mathcal{V}$ as the partition point, and the edge $\left \langle \mathscr{l}_m,\mathscr{l}_{m+1}\right \rangle$ is cut and $\mathcal{L}$ is split into two parts, where $\mathcal{L}_{edge}=\left \{\mathscr{l}_1,\mathscr{l}_ 2,\cdots,\mathscr{l}_m \right \}$ denotes the layer that will perform the computation at the edge device, $\mathcal{L}_{cloud}=\left \{\mathscr{l}_{m+1},\mathscr{l}_{m+2},\cdots,\mathscr{l}_n \right \}$ denotes the layer that will be offloaded to the cloud to perform the computation.

\label{def-1}
\end{definition}

The latency of cloud-edge collaborative inference mainly includes computation latency and transmission latency, both of which can be obtained by monitoring computation and network resources respectively in the offline configuration stage, and subsequently computed by analyzing the fixed model layer by layer, as shown in the upper left part of Figure. \ref{fig3}. The transmission latency includes the latency of uploading data $T^t_{up}$ and the latency of returning results $T^t_{down}$. The latency of transmitting data from the edge devices to the cloud server through the wireless network depends on the data rate of the network $R\left ( t\right )$, which can be calculated by the Shannon-Hartley theorem \cite{goldsmith2005wireless} as follows.

 \begin{figure}
 \centering
  \begin{subfigure}
  \centering
  \includegraphics[width=0.7\linewidth]{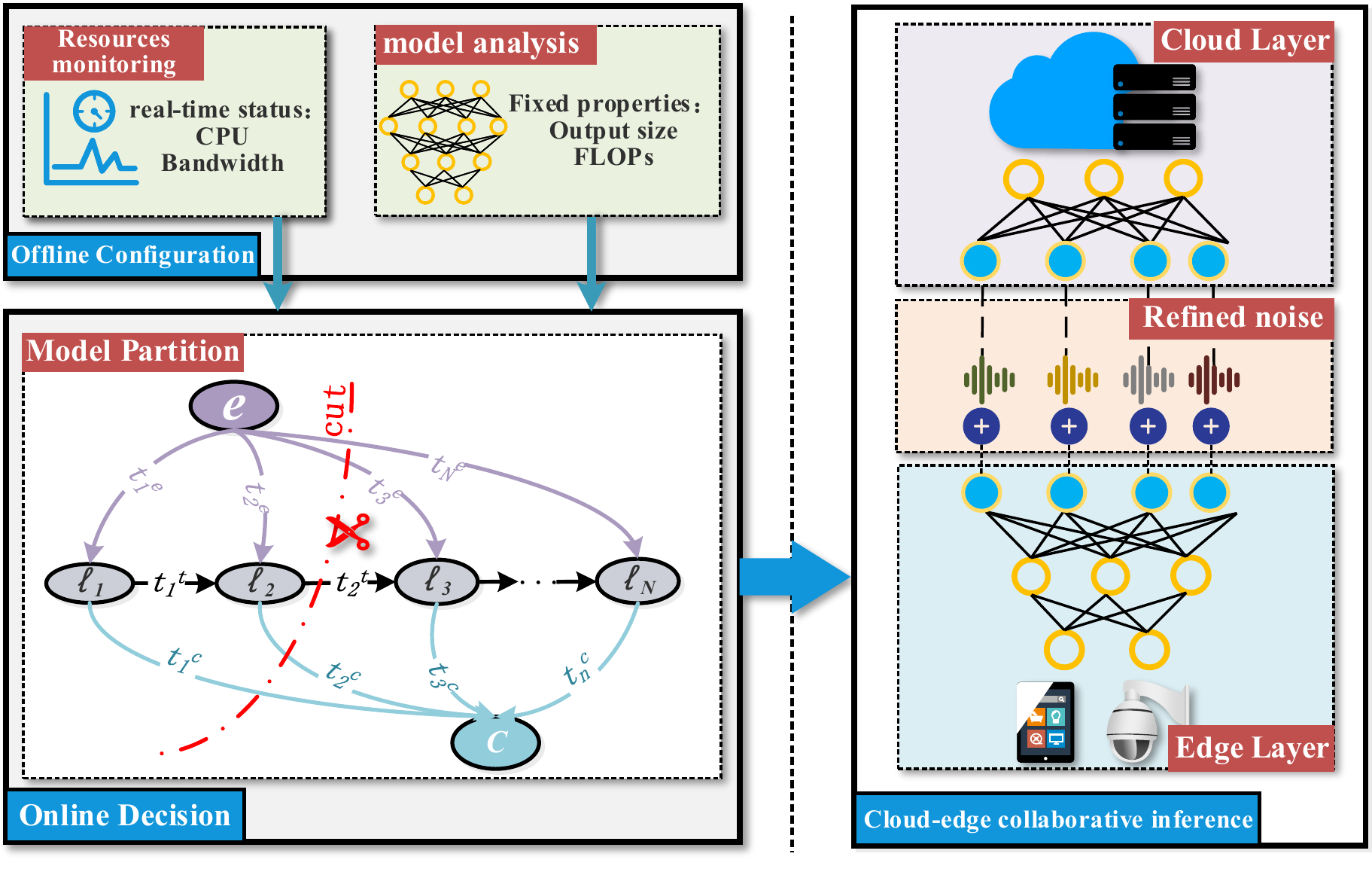}
  \caption{CIS System Framework.}
  \label{fig3}
  \end{subfigure}
\end{figure}

\begin{equation}
\label{eq-1}
R\left ( t\right )=B_w\log_2\left ( 1+\frac{P\left ( t\right )g\left ( t\right )}{\sigma^{2}+I\left ( t\right )}\right )
\end{equation}
where $B_w$ and $g\left ( t\right )$ denote the bandwidth and flat fading channel gain at the instantaneous $t$ moment, respectively; $P\left ( t\right )$ denotes the transmission power of the edge device, $\sigma^{2}$ denotes the noise power of the edge device, and $I\left( t\right)$ denotes the inter-area interference power.

Thus, the specific definition of transmission latency for cloud-edge collaborative inference can be expressed as follows: 
\begin{definition} (Transmission Latency)
Given a $n$-layer DNN model with a directed linked list $\mathcal{L}=\left \{\mathcal{V},\mathcal{E} \right \}$, a partition point of $\mathscr{l}_m$ with the output size of $D_m$, and $D_r$ being the data size of the inference result. The transmission inference of the cloud-edge collaboration is defined as follows:
\begin{equation}
\label{eq-2}
T^t = T^t_{up}+T^t_{down} = \frac{D_m}{R\left ( t\right )}+\frac{D_r}{R\left ( t\right )}
\end{equation}
\label{def-5-2}
\end{definition}

Another important percentage of inference latency is the computation latency of each layer of the DNN. Since the network structure and the number of parameters of the model during DNN inference are frozen, and the input size of the edge devices is also fixed in size. Therefore, we refer to the method mentioned in the literature \cite{9155237} to estimate the computational latency of each layer by counting the data volume of floating-point operations per second (FLOPs) of different types of layers in DNN. The amount of FLOPs computed for the convolutional layer $F_{conv}$ and the fully connected layer $F_{fully}$ can be expressed as:

\begin{align}
\label{eq-3}
F_{conv}&= 2HW\left(C_{in}K^2+1\right)C_{out} \\
F_{fully}&=\left(I\cdot O+O\cdot \left(I-1\right)\right) =\left(2I-1\right)O
\end{align}
\noindent where $H$, $W$ and $C_{in}$ denote the height, width and number of channels of the input feature map, respectively. $K$ denotes the size of the convolution kernel, and $C_{out}$ denotes the number of channels of the output convolution layer. $I$ and $O$ denote the input and output dimensions of the fully connected layer, respectively. Note that the activation layer is assumed to be a Rectified linear unit ReLU, which has negligible execution time compared to the dot product computation of the convolution and fully connected layers.

\begin{definition} (Computation Latency )
Given the $i_{th}$ layer of a DNN model, according to the type of the layer (convolutional layer or fully-connected layer), its execution delay on edge devices and cloud servers can be expressed as:
\begin{align}
\label{eq-5}
t^e_i = \left(F_{conv}|F_{fully}\right)/P_{edge} \\
t^c_i = \left(F_{conv}|F_{fully}\right)/P_{cloud}
\label{def-3}
\end{align}
\end{definition}
\noindent where $P_{edge}$ and $P_{cloud}$ denote the floating-point computing power of the edge device or cloud server, respectively, which can be obtained from the CPU or GPU specifications.

Based on the above definitions and analysis, CIS can obtain statistics on edge computation latency, transmission latency and cloud computation latency for each layer in the offline configuration phase: $T^c_{edge}=\left\{t^e_1,t^e_2,\cdots,t^e_n\right\}$, $T^t=\left\{t^t_1,t^t_2,\cdots,t^t_n\right\}$, and $T^c_{cloud}=\left\{t^c_1,t^c_2,\cdots,t^c_n\right\}$. As shown in the lower left part of \ref{fig3}, two virtual vertices $e$ and $c$ are constructed to represent the edge layer and the cloud service layer, respectively. The above offline statistical latency information is combined with the DNN model directed chain table $\mathcal{L}$ (defined in Eq.\ref{def-1}) and converted into a directed acyclic graph $\mathcal{G}=\left \{\mathcal{V},\mathcal{E}' \right \}$, where $\mathcal{E}'=\mathcal{E}\cup \left \{ \left \langle e,\mathscr{l}_i\right \rangle,\left \langle \mathscr{l}_i,c\right \rangle \right \}_{i=1}^{n} $. 
Towards the collaborative inference scenario, the optimal segmentation point selection for DNN networks needs to satisfy the minimization of the total inference delay, defined as follows:

\begin{definition} (Total Inference Latency )
Given a DNN model DAG graph $\mathcal{G}=\left \{\mathcal{V},\mathcal{E}'\right \}$ with the partition point $\mathscr{l}_m$ for $n$ layers, then the total inference latency for cloud-edge collaboration inference is defined as follows: 
\begin{equation}
\label{eq-7}
T_{total} = T^t\left(\mathscr{l}_m\right)+\sum_{\mathscr{l}_i\in \mathcal{L}_{edge}}t_i^e +\sum_{\mathscr{l}_j\in \mathcal{L}_{cloud}}t_j^c 
\end{equation}
\label{def-4}
\end{definition}
\noindent where $t_i^e$ and $t_j^c$ denote the computational latency of the corresponding layers on the edge devices and cloud servers, respectively.

Comprehensive analysis of the above, the model partition for collaborative inference in CIS is shown in algorithm \ref{algorithm1}. First, by monitoring the network transmission rate and analyzing the fixed properties of the model in the offline configuration phase, the computational delay of each layer in the edge devices and cloud servers, respectively, and the transmission delay of different layers can be predicted in advance (lines 1-3 of the algorithm \ref{algorithm1}). In line 4 of the algorithm \ref{algorithm1}, the original DNN network can be converted into a weighted directed acyclic graph $\mathcal{G}=\left \{\mathcal{V},\mathcal{E}'\right \}$ based on these statistical information. Finally, each layer is considered as a partition layer to count the total inference delay separately, and the layer with the minimum total inference delay is selected as the optimal splitting layer (lines 6-12 of the algorithm \ref{algorithm1}), which will be deployed in the cloud-edge collaborative inference environment respectively. It is worth noting that the dynamically changing network bandwidth affects the selection of optimal segmentation points. Therefore, CIS will constantly monitor the changes of network resources and can implement adaptive partitioning of DNN models to maximize the inference performance of cloud-edge collaboration.

\begin{algorithm}[h]
\caption{Model partition algorithm for cloud-edge collaborative inference.}
\label{algorithm1}
\hspace*{0.02in} {\bf Input:}
$f_{\Theta} = \left \{ f_{\theta_1}\circ  f_{\theta_2}\cdots\circ  f_{\theta_n}\right \}$, $n$-layer DNN network; $D=\left\{D_1,D_2,\cdots,D_n\right\}$, Output data size per layer; $R\left(t\right)$, the current network transmission rate; $P_{cloud}$ and $P_{edge}$, Floating-point computing power for cloud and edge devices.\\
\hspace*{0.02in} {\bf Output:}
$\mathscr{l}_{best}$, optimal model partition point; $T_{best}$, total inference time latency.
\begin{algorithmic}[1]
\State $T^c_{edge}=\left\{t^e_1,t^e_2,\cdots,t^e_n\right\} \leftarrow f^{c}_{edge}\left(f_{\Theta},P_{edge}\right)$
\State $T^c_{cloud}=\left\{t^c_1,t^c_2,\cdots,t^c_n\right\} \leftarrow f^{c}_{cloud}\left(f_{\Theta},P_{cloud}\right)$
\State $T^t=\left\{t^t_1,t^t_2,\cdots,t^t_n\right\} \leftarrow f^{t} \left(D,R\left(t\right)\right)$
\State 	$\mathcal{G}=\left \{\mathcal{V},\mathcal{E}'\right \} \leftarrow DAG\left(f_{\Theta},T^c_{edge},T^c_{cloud},T^t \right)$	\State 	$T_{best} = +\infty$

\For{$i = 1$ to $n$} 
    \State 	$\mathcal{L}_{edge}=\left\{\mathscr{l}_1,\mathscr{l}_2\cdots \mathscr{l}_i\right\},\mathcal{L}_{cloud}=\left\{\mathscr{l}_{i+1},\mathscr{l}_{i+2}\cdots \mathscr{l}_n\right\} \leftarrow Cut\left(\mathcal{G},\mathscr{l}_{best}\right)$ 
    \State 	$T_{total} = T^t\left(\mathscr{l}_i\right)+\sum_{\mathscr{l}_i\in \mathcal{L}_{edge}}t_i^e +\sum_{\mathscr{l}_j\in \mathcal{L}_{cloud}}t_j^c $
    \If{$T_{total} < T_{best}$}
        \State $T_{best}=T_{total}$ 	
        \State $\mathscr{l}_{best}\leftarrow \mathscr{l}_{i}$
    \EndIf
\EndFor     
\State \Return $T_{best},\mathscr{l}_{best}$

\end{algorithmic}

\end{algorithm}

\subsection{A privacy-enhancing mechanism for cloud-edge collaborative inference}
As described earlier in the threat model, CIS employs a cloud-edge collaborative inference schema to keep sensitive data from edge devices out of the local area, mitigating privacy security issues to some extent. However, it still faces white-box and white-black reconstruction attacks by advanced internal and external adversaries through the intermediate layer output results and legitimate queries. Therefore, as shown in the right part of Figure. \ref{fig3}, we propose a privacy-enhancing mechanism for cloud-edge collaboration, Collaborative-DP, which injects refined Laplace noise satisfying $\epsilon$-DP when the edge devices upload the intermediate output of the partition layer, thus satisfying the privacy-preserving enhancement of cloud-edge collaborative inference and minimize the degradation of inference accuracy.

First of all, as shown in the Laplace mechanism satisfying $\epsilon$-DP given by the definition \ref{def-2-3}, the added noise needs to be sampled from the distribution $Y\sim Lap\left(\frac{\triangle s}{\epsilon}\right)$, where $\epsilon$ is the privacy budget and $\triangle s$ is the global sensitivity.  However, in the cloud-edge collaborative inference scenario, the global sensitivity of the partition layer $\mathscr{l}_m$ is difficult to estimate without any a priori bound. Overly conservative estimation of $\triangle s$ will add too much noise to the output representation and will reduce the accuracy of subsequent inference in the cloud. Similar to the related work \cite{abadi2016deep,wang2018not}, Collaborative-DP employs an 
norm clipping of the uploaded intermediate results to a fixed bound as a way to estimate the global sensitivity. Specifically, for sensitive input $x$ from any edge device, an infinite norm is applied to clip the intermediate output result $v_m=f^{device}_{\Theta_l}\left(x\right)$ of the partition layer $\mathscr{l}_m$ separately for each channel:

\begin{equation}
\label{eq-8}
v'_m \left [ i\right ] =\frac{v_m \left [i \right ]}{\max\left(1,\frac{\left \|v_m \left [ i\right ]\right \|_{\infty}}{\boldsymbol{C_m}}\right)}, for\ i = 1,2,\cdots,k, \ \ \ v_m \in \mathbb{R}^{k*w*h}
\end{equation}
\noindent where $\boldsymbol{C_m}$ is the clipping threshold, which in practice can be set to the median of the infinite norm of the output tensor of the partition layer $\mathscr{l}_m$ during training. $k,w,h$ denote the dimensions of the number of channels, width and height of the intermediate result (tensor), respectively.

After clipping $v_m$ by infinite norm, when $\left \|v_m\left [ i\right ]\right \|_{\infty}\leq \boldsymbol{C_m}$, the value of $v_m\left [ i\right ]$ will be preserved, and vice versa the value of $v_m\left [ i\right ] $ value will be clipped to $\boldsymbol{C_m}$. Hence, the global sensitivity $\triangle s$ can be estimated as $2\boldsymbol{C_m}$, and the clipping threshold $\boldsymbol{C_m}$ is independent of the input and does not reveal any sensitive information. In brief, we can add the independent identically distributed noise sampled from the distribution $Y\sim Lap\left(\frac{2\boldsymbol{C_m}}{\epsilon}\boldsymbol{I}\right),\boldsymbol{I}\in  \mathbb{R}^{w*h}$ directly to the $v'_m\left [ i\right ]$ after clipping to ensure that the cloud-edge collaborative inference algorithm satisfies the $\epsilon$-DP. Unlike adding noise to the final output of the deterministic function, injecting noise during the transformation is more flexible and more suitable for the stacking structure of DNN networks.

Given a total privacy budget $\epsilon$, it is crucial to further refine the allocation of privacy budget and generation noise, which has a performance impact on inference accuracy. Inspired by the work of Lin et al. \cite{lin2020hrank} that the average rank of feature maps generated by a single filter is always the same, regardless of the number of image batches received by the DNN. Consequently, a small batch of input images ($g \approx 500$) can be utilized to accurately estimate the expectation of the feature map rank. The high rank of the feature map (i.e., the defined intermediate layer output $v_m$) reflects the magnitude of the amount of information extracted by the different convolution filters in the current partition layer $\mathscr{l}_m$. Therefore, the proposed Collaborative-DP allocates the privacy budget by the ratio of the rank of the feature submap $v_m\left [ i\right ]$ from different channels to the rank of all feature maps. Specifically, feature submaps with higher rank contribute more to the inference accuracy of the model and can be assigned a higher privacy budget (corresponding to less noise), thus achieving a tradeoff between privacy and availability for collaborative inference. The schematic of privacy budget allocation based on feature graph rank is shown in Fig. \ref{fig4}, and the related calculations are as follows: 

\begin{align}
\label{eq-9}
\epsilon_i = \epsilon \cdot \frac{\textbf{Rank}\left (v_m\left [ i\right ]\right)}{\sum_{j=1}^{k}\textbf{Rank}\left (v_m\left [ j\right ]\right )},\ \ \ where\notag \\
\textbf{Rank}\left (v_m\left [ i\right ]\right ) \approx  
\frac{1}{g}\sum_{t=1}^{g}\textbf{SVD}\left(v_m\left [ i\right ],t\right),\ \ \sum_{i=1}^{k}\epsilon_i = \epsilon
\end{align}
\noindent where $\textbf{Rank}\left(\cdot \right)$ estimates the expectation of the feature map rank by taking $g$ inputs, $v_m\left [ i\right ]$ denotes the feature map generated by the $i$th filter ($k$ in total), and $\textbf{SVD}\left(\cdot \right)$ denotes the rank of the feature map obtained by singular value decomposition.

 \begin{figure}
 \centering
  \begin{subfigure}
  \centering
  \includegraphics[width=0.7\linewidth]{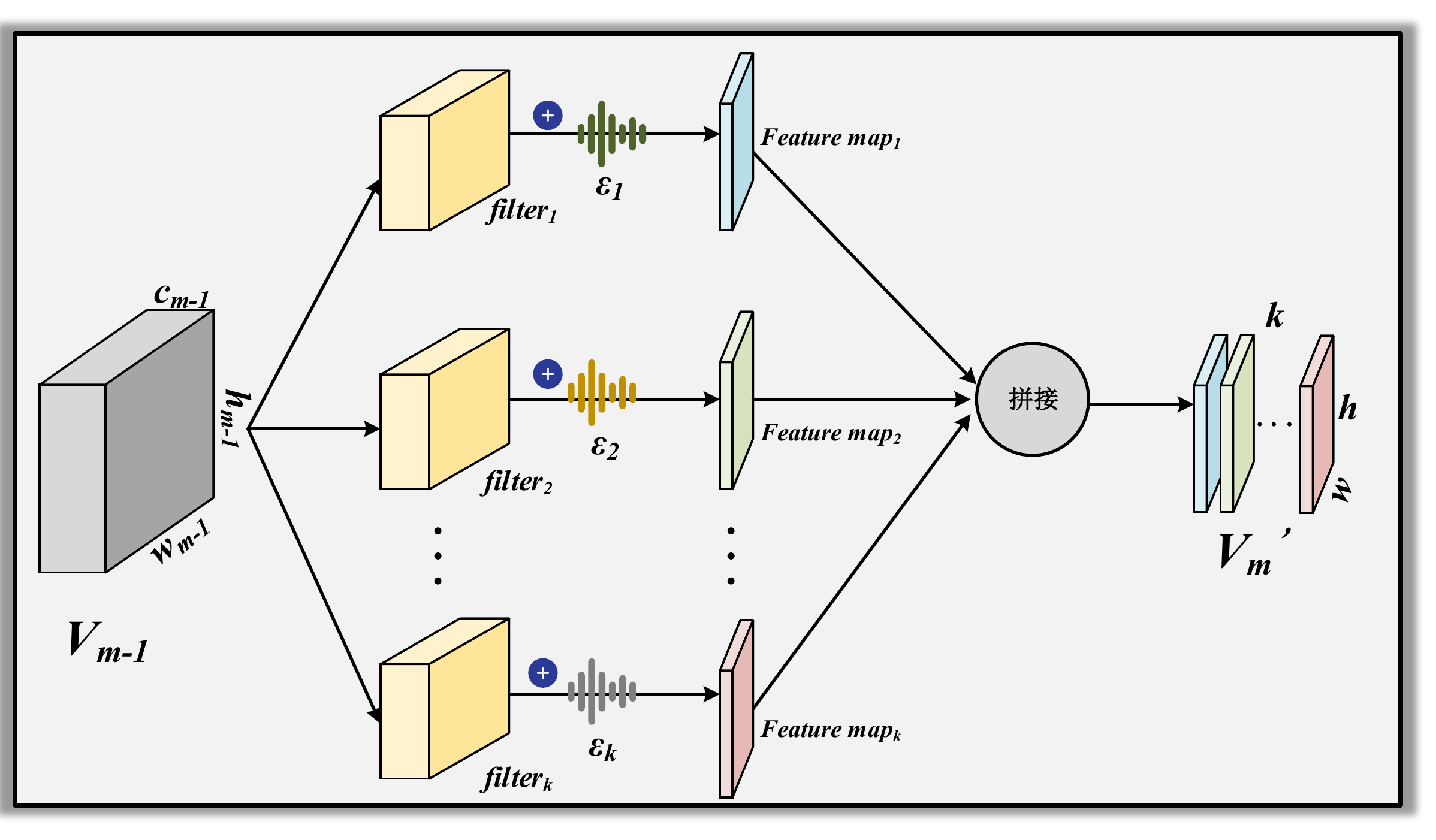}
  \caption{Schematic of privacy budget allocation based on feature map rank.}
  \label{fig4}
  \end{subfigure}
\end{figure}

Comprehensive analysis above, we give the detailed design of the adaptive privacy preservation cloud-edge collaborative inference algorithm, termed Collaborative-DP. The first line of the algorithm indicates that the privacy budget allocation based on the feature map rank is implemented according to the formula \ref{eq-9}; subsequently, the refined noise is added to the intermediate output by the edge devices at the partition layer in proportion to the privacy budget (lines 3-7 of the algorithm); finally, line 9 of the algorithm indicates that the remote cloud receives the intermediate results with the added noise and then completes the remaining inference network and outputs the results.

\begin{algorithm}[htb]
\caption{Adaptive privacy preserving cloud-edge collaborative inference algorithm (Collaborative-DP).}
\label{algorithm2}
\hspace*{0.02in} {\bf Input:}
$x$, the sensitive data input from the edge device; $D=\left\{x_i \right\}_{i=1}^g$, the training set; $f^{device}_{\Theta_l}\left(\cdot\right)$, the network layer executed by the edge device; $f^{cloud}_{\Theta_r}\left(\cdot \right )$, the network layer executed by the cloud server; $\mathscr{l}_m$, the network partition layer (noise addition layer); the number of convolution filters in $\mathscr{l}_m$; $\boldsymbol{C_m}$, the clipping threshold; $\left\{\epsilon_i \right \}_{i=1}^k$, the privacy budget assigned to different feature submaps.\\
\hspace*{0.02in} {\bf Output:}
$y$, inference result.
\begin{algorithmic}[1]
\State 	$\left\{\epsilon_i \right\}_{i=1}^k \leftarrow AllocateBudget\left(D,f^{device}_{\Theta_l}\left(\cdot\right)\right)$ 
\State 	$v_m \leftarrow f^{device}_{\Theta_l}\left(x\right)$ 
\For{$i = 1$ to $k$} 
    \State 	$v'_m\left [ i\right ]\leftarrow v_m \left [i \right ] / \max\left(1,\frac{\left \|v_m \left [ i\right ]\right \|_{\infty}}{\boldsymbol{C_m}}\right)$
    \State $\hat{v}'_m\left [ i\right ]\leftarrow v'_m\left [ i\right ] + Laplace\left(\frac{2\boldsymbol{C_m}}{\epsilon_i}\boldsymbol{I}\right)$ 
 
\EndFor 
\State 	$y \leftarrow f^{cloud}_{\Theta_r}\left(\hat{v}'_m\right)$	

\State \Return $y$
\end{algorithmic}

\end{algorithm}

Next, it is necessary to prove whether the algorithm \ref{algorithm2} satisfies the strict differential privacy guarantee.
\begin{theorem}
Adaptive privacy preserving cloud-edge collaborative inference algorithm (Collaborative-DP) satisfies $\sum_{i=1}^{k}\epsilon_{i}$-differential privacy.
\label{the-2-2}
\end{theorem}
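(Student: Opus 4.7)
The plan is to view Collaborative-DP as the composition of $k$ per-channel Laplace releases on the clipped partition-layer tensor, followed by the deterministic cloud-side post-processing $f^{cloud}_{\Theta_r}$. I would then apply Definition~\ref{def-2-3} channel-by-channel, sequential composition across the $k$ channels, and finally Property~\ref{pro-2-1} to absorb the cloud computation.

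First I would fix a channel index $i$ and analyze the randomized mapping $x \mapsto \hat{v}'_m[i]$ built in lines 4--5 of Algorithm~\ref{algorithm2}. Set $g_i(x) := v'_m[i]$ for the clipped sub-tensor on channel $i$. The infinity-norm clipping in~(\ref{eq-8}) forces $\|g_i(x)\|_\infty \le \boldsymbol{C_m}$ for every input, so for any pair of adjacent inputs every entry of $g_i(x) - g_i(x')$ is bounded in absolute value by $2\boldsymbol{C_m}$; this is precisely the global-sensitivity quantity in Definition~\ref{def-2-2} for the per-coordinate release. Adding IID noise drawn from $Lap(2\boldsymbol{C_m}/\epsilon_i)$ to each entry of $g_i(x)$ then instantiates the Laplace mechanism of Definition~\ref{def-2-3} with budget $\epsilon_i$, so the channel-$i$ release $\hat{v}'_m[i]$ is $\epsilon_i$-differentially private.

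Next I would assemble the $k$ channel releases. Since the injected Laplace variables across channels are mutually independent and every channel is computed on the same input $x$, the joint output $\hat{v}'_m = (\hat{v}'_m[1], \ldots, \hat{v}'_m[k])$ is a sequential composition of $k$ independent $\epsilon_i$-DP mechanisms; the standard sequential composition theorem of differential privacy then yields a total budget of $\sum_{i=1}^{k} \epsilon_i$ for the release of $\hat{v}'_m$. Finally, the cloud stage $y = f^{cloud}_{\Theta_r}(\hat{v}'_m)$ consumes only $\hat{v}'_m$ and never re-touches the sensitive input $x$, so Property~\ref{pro-2-1} shows that this post-processing preserves the privacy level. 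Hence Algorithm~\ref{algorithm2} as a whole satisfies $\sum_{i=1}^{k} \epsilon_i$-differential privacy.

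The main obstacle is the sensitivity step in the first paragraph: one has to pin down the notion of ``adjacent inputs'' in this single-inference setting and justify that the per-entry bound $2\boldsymbol{C_m}$ implied by the $\ell_\infty$ clipping is indeed the sensitivity used by the scalar Laplace mechanism invoked per coordinate, rather than a dimension-inflated $\ell_1$ bound over the full $w \times h$ sub-tensor. Once that identification is made precise (by interpreting each entry of $\hat{v}'_m[i]$ as its own Laplace release and stacking them into the same channel-level composition), the remaining composition and post-processing steps follow directly from Theorem~\ref{theo-2-1}-style reasoning and Property~\ref{pro-2-1}, and no further calibration of the noise scale is needed.
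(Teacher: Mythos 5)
Your overall route is the same as the paper's: establish $\epsilon_i$-DP for the release of each channel $\hat{v}'_m[i]$ via the Laplace mechanism with sensitivity estimated as $2\boldsymbol{C_m}$ from the clipping in Eq.~\eqref{eq-8}, combine the $k$ channel mechanisms to accumulate $\sum_{i=1}^{k}\epsilon_i$, and absorb $f^{cloud}_{\Theta_r}$ by the post-processing property (Property~\ref{pro-2-1}). The only structural difference is in the middle step: the paper writes out the Laplace density-ratio computation over the $w\times h$ coordinates of one filter's output and then invokes its combination theorem (Theorem~\ref{theo-2-1}) for the $k$ filters acting on the same source $v_{m-1}$, claiming the summed budget, whereas you appeal directly to sequential composition; your phrasing is actually the cleaner match for the $\sum_i\epsilon_i$ conclusion, since the theorem as stated in the paper is the parallel (disjoint-subsets, $\max\epsilon_i$) version.

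The one place where your proposal goes astray is the patch you offer for the sensitivity obstacle you (correctly) identify. Treating each entry of $\hat{v}'_m[i]$ as its own scalar Laplace release with per-entry sensitivity $2\boldsymbol{C_m}$ and noise scale $2\boldsymbol{C_m}/\epsilon_i$ makes each entry $\epsilon_i$-DP, but ``stacking them into the same channel-level composition'' then costs $w\cdot h\cdot\epsilon_i$ for the channel, not $\epsilon_i$, and the theorem's conclusion would degrade to $wh\sum_i\epsilon_i$. The paper does not take this route: it treats the whole channel output as a single vector-valued query whose global $\ell_1$ sensitivity it takes to be $2\boldsymbol{C_m}$, so that the product of per-coordinate ratios is bounded by $\exp\bigl(\epsilon_i\|f^{i}_{\theta_m}(D)-f^{i}_{\theta_m}(D')\|_1/2\boldsymbol{C_m}\bigr)=\exp(\epsilon_i)$. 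Your worry that $\ell_\infty$ clipping only bounds per-coordinate differences (giving an $\ell_1$ bound of $2\boldsymbol{C_m}wh$, not $2\boldsymbol{C_m}$) is legitimate and applies to the paper's own sensitivity estimate as well, but it cannot be resolved by re-interpreting the coordinates as separate releases without either recalibrating the noise scale or strengthening the clipping to an $\ell_1$ bound; so as written, your final paragraph's claim that ``no further calibration of the noise scale is needed'' does not hold, and you should instead state (as the paper implicitly does) the assumption that the channel-level $\ell_1$ sensitivity is bounded by $2\boldsymbol{C_m}$.
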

\begin{proof}
 Initially, we consider the privacy-preserving case of a single convolutional filter $filter_i$. We use $f_{\theta_m}^{i}\left(\cdot\right)$ to denote the $i$th convolutional computation function of the partition layer $\mathscr{l}_m$.  Assuming that $D$ and $D'$ are adjacent data sets, the Laplace mechanism can be defined as the random function $\mathcal{M}_i\left(x,f_{\theta_m}^{i}\left(\cdot\right),\epsilon_i\right)=f_{\theta_m}^{i}\left(x\right)+Laplace\left(\frac{2\boldsymbol{C_m}}{\epsilon_i}\boldsymbol{I}\right)$, where the global sensitivity $\left |\Delta s\right | = 2\boldsymbol{C_ m}$. Therefore, for any output data point $t \in \mathbb{R}^{w*h}$ of the random function $\mathcal{M}_L$, we have
 \begin{align}
\label{eq-5-10}
\frac{Pr\left ( \mathcal{M}_i\left ( D\right )\right )=t}{Pr\left ( \mathcal{M}_i\left ( D'\right )\right )=t}&=\prod_{j=1}^{w*h}\frac{exp\left ( \frac{-\epsilon_i\left | t_j-f_{\theta_m}^{i}\left(D\right)_j\right |}{2\boldsymbol{C_m}}\right )}{exp\left ( \frac{-\epsilon_i\left | t_j-f_{\theta_m}^{i}\left(D'\right)_j\right |}{2\boldsymbol{C_m}}\right )}\notag
\\&=exp\left(\frac{\epsilon_i \left [-\sum_{j=1}^{w*h}\left | t_j-f_{\theta_m}^{i}\left(D\right)_j\right | - \sum_{j=1}^{w*h}\left | t_j-f_{\theta_m}^{i}\left(D'\right)_j\right |  \right ]
}{2\boldsymbol{C_m}}\right)\notag
\\&\leq exp\left ( \frac{\epsilon_i \left \| f_{\theta_m}^{i}\left(D\right)-f_{\theta_m}^{i}\left(D'\right)\right \|_1}{2\boldsymbol{C_m}}\right )\notag
\\&=exp\left(\epsilon_i\right)
\end{align}
Hence the convolution computation of a single filter $filter_i$ in the algorithm is satisfying $\epsilon_i$-differential privacy. As shown in Figure \ref{fig4}, all the convolutional computations of the partition layer $\mathscr{l}_m$ can be considered as a set of privacy mechanisms performed sequentially on the same data source $v_{m-1}$. Therefore, the combination mechanism (theorm \ref{theo-2-1}) of the partition layer $\mathcal{M}_{\left [ k\right ]}=\left \{ \mathcal{M}_{1},\mathcal{M}_{2},\cdots,\mathcal{M }_{k}\right \}$ is satisfied by $\sum_{i=1}^{k}\epsilon_{i}$-differential privacy.
Finally, according to the post-processing property of differential privacy (property \ref{pro-2-1}), for the remaining inference calculations $f^{cloud}_{\Theta_r}$ of the cloud server, $f^{cloud}_{\Theta_r}\circ \mathcal{M}_{\left [ k\right]}$ still satisfies $ \sum_{i=1}^{k}\epsilon_{i}$-differential privacy. Therefore, the adaptive privacy preserving cloud-edge collaborative inference algorithm (Collaborative-DP)satisfies $\sum_{i=1}^{k}\epsilon_{i}$-differential privacy.
\end{proof}

\section{ Evaluation}
In this section, the experimental setup of the CIS system is first presented, including the environment configuration and performance comparison baseline, the evaluation metrics, and the models and datasets used. Subsequently, CIS will be evaluated in several aspects such as inference latency and accuracy, privacy strength and availability, and success rate against different types of attacks, respectively.
\subsection{Experiment Setup}
\subsubsection{Experimental environment and configuration}

This work builds a real hardware experimental platform to evaluate the feasibility of the CIS system, where the edge device is a Jetson NANO mobile platform developed by NVIDIA, equipped with a 64-bit quad-core ARM A57@1.43GHz CPU, 128-core NVIDIA Maxwell @921MHz GPU, and 4GB 64-bit LPDDR4 @1600MHz memory. The cloud server is equipped with a 64-bit 10-core Intel Xeon(R) W-2255 @3.70GHZ CPU, a GForce RTX 2080Ti GPU with 12GB memory, and 64GB of RAM. Communication between the cloud server and the edge devices uses a point-to-point WiFi connection, and network traffic simulation is controlled by the Linux Traffic Control tool, which can simulate network scenarios with different network bandwidth, communication quality, and latency.

\subsubsection{ Models and data sets}
To evaluate the performance of the cloud-edge collaborative security inference algorithm, the experiments use three chain topology DNN models: AlexNet, VGG16, and MobileNet v1, and the models are modified accordingly to be adapted and deployed in the CIS framework. Regarding the dataset, the CIFAR-10 \cite{krizhevsky2009learning} dataset is used for all model training and inference in our experiments to compare and validate the accuracy of our proposed methods. The dataset CIFAR-10 contains 50,000 RGB training images of size 32 × 32 and 10,000 test images for 10 classes.

\subsubsection{Baseline and Evaluation Metrics}
\textbf{(1) Inference latency performance}

(a) Cloud-only: The total inference latency generated by the DNN network processing only in the cloud.

(b) Device-only: The total inference delay generated by the DNN network for processing only at the edge devices.

(c) Neurosurgeon \cite{kang2017neurosurgeon}: The first proposed method for collaborative inference between edge devices and clouds based on DNN model partitioning, the total inference delay generated by DNN networks in collaboration at the cloud and edge.

\textbf{(2) Defensive performance of reconstruction attacks}

White-box Reconstruction Attack (WRA), based on the attack method provided by He et al \cite{he2020attacking}, an internal adversary from the cloud reconstructs the input data of the original sensitive edge device using the network structure and parameters of the shared model, as well as the uploaded intermediate output information.

Besides, we consider another Black-box Inverse-Network Attack (BINA) \cite{he2020attacking} from an external threat compared to the white-box reconstruction attack, where an adversary from the external does not have any a priori information about the network and parameters of the model and can only train an inverse model to reconstruct the input data from intermediate results through legitimate query requests using the Cloud Edge collaborative inference service.

In addition to the visualization of the reconstructed images to validate the proposed privacy-preserving mechanism, MSE, SSIM, and PSNR metrics, which typically measure the difference between the original and reconstructed images, are used to quantify the effectiveness of the reconstruction attack.

(a) Mean Squared Error (MSE): MSE measures the similarity between two images by calculating the cumulative squared error of the pixel values. the lower the MSE value, the higher the similarity between the two images ($A$ and $B$, image pixel size $m \cdot n$). The specific calculation is as follows:
\begin{equation}
MSE\left(A,B\right)=\frac{1}{m\cdot n}\sum_{i,j=1.1}^{m,n}\left \| A\left ( i,j\right )-B\left ( i,j\right )\right \|^2
\label{eq-11}
\end{equation}

(b) Structural similarity (SSIM): SSIM is a perception-based metric that measures the similarity between two images based on structural information. The specific calculation is as follows:
\begin{equation}
SSIM\left ( A,B\right )=\frac{\left ( 2\mu_A\mu_B+C_1\right )\left ( 2\sigma_{AB}+C_2\right )}{\left ( \mu_A^2+\mu_B^2+C_1\right )\left ( \sigma_A^2+\sigma_B^2+C_2\right )}
\label{eq-12}
\end{equation}
\noindent where $\mu_A$ and $\mu_B$ denote the mean values of pixels in images $A$ and $B$, respectively. $\sigma_A^2$ and $\sigma_B^2$ denote the variance, and $\sigma_{AB}$ denotes the covariance. In addition, $C_1$ and $C_2$ are constants and the value of SSIM is between $\left [0,1\right ]$, and higher SSIM values indicate higher similarity between the two images.

(c) Peak signal-to-noise ratio (PSNR): PSNR measures the similarity of two images by the peak error. the larger the PSNR value, the higher the image similarity. The specific calculation is as follows:
\begin{equation}
PSNR\left(A,B\right)=10\log_{10}\left(\frac{255^2}{MSE\left(A,B\right)} \right)
\label{eq-13}
\end{equation}

\subsection{Inference Latency Performance Analysis}
First, for the analysis of inference delay performance, our target networks are selected as AlexNet, VGG16, and MobileNet v1. The number of layers of the corresponding networks is given in Table \ref{table:1}. Meanwhile, in order to simulate the dynamically changing network quality between the real edge and the cloud, three networks with different conditions are simulated by the Linux Traffic Control tool, respectively, as shown in Table \ref{table:1} below.

\begin{table}[htb]
\centering

\tabcolsep=0.08cm
\caption{Simulation of networks with different communication rates and the number of layers corresponding to the target network.}
\begin{tabular}{@{}cclcccccc@{}}
\toprule
\textbf{Network conditions}         & \textbf{Poor} & \textbf{Medium}              & \textbf{Good} & \textbf{Excellence} & \textbf{Model Type} & \textbf{AlexNet} & \textbf{VGG-16} & \textbf{MobileNet v1} \\ \midrule
\textbf{Uplink transmission rate (Mpbs)} & 0.15       & \multicolumn{1}{c}{1.3} & 4          & 15         & \textbf{Number of Layers} & 13               & 34              & 54                    \\ \bottomrule
\end{tabular}
\label{table:1}
\end{table}

Subsequently, the proposed CIS cloud-side collaborative inference system achieves inference acceleration in comparison with several other approaches, including device-only, cloud-only, and Neurosurgeon, under different network quality conditions, given in Figure \ref{fig6}. It can be clearly found in Figure \ref{fig:5-6-(a)} that with poor network quality (uplink bandwidth as low as 0.15Mpbs), the transmission latency generated by sending raw data to the cloud is much larger than the computation latency in edge devices, and the inference speedup of cloud-only is significantly lower than that of device-only for DNN models of different sizes baseline. For the CIS and Neurosurgeon methods with the model partitioning mechanism, the inference speedup ratios for different models are still slightly higher than the baseline device-only method even with low network transmission quality, reaching 1.15x $\sim$1.65 for CIS and 0.88x $\sim$1.34x for Neurosurgeon.

With the improvement of network transmission quality, the bottleneck of computational performance on edge devices is gradually amplified, while the powerful computational power of cloud servers is released with the benefit of network quality improvement. As a result, the inference acceleration ratios of cloud-only, CIS, and Neurosurgeon are increased exponentially. When the network transmission rate reaches 4Mps, the inference speedup ratios of cloud-only reaches 7.12x, CIS reaches 13.56x and Neurosurgeon reaches 10.06x. Finally, when the network transmission rate reaches 15Mpbs, the ratio of transmission delay to total inference delay decreases further, and the inference speedup ratio of cloud-only is significantly better than other methods, reaching 17.89x for the MobileNet model. Moreover, Neurosurgeon's inference speedup ratio degrades compared to the network transmission rate of 4 Mbps. Nevertheless, CIS still slightly outperforms cloud-only in the inference speedup ratio for the AlexNet model.

In a word, with the continuous development of computing capability of edge devices, the computational schema of cloud-edge collaborative inference can make up for the computational latency bottleneck of cloud-only mode under the complex and variable network transmission quality. Meanwhile, the security and privacy of data is further enhanced by ensuring that the original data does not leave the device.

\begin{figure}[h]
	\centering
	\subfigure[\textbf{Poor network quality (0.15Mpbs)}]{
		\begin{minipage}[t]{0.45\textwidth}
			\includegraphics[width=1\textwidth]{./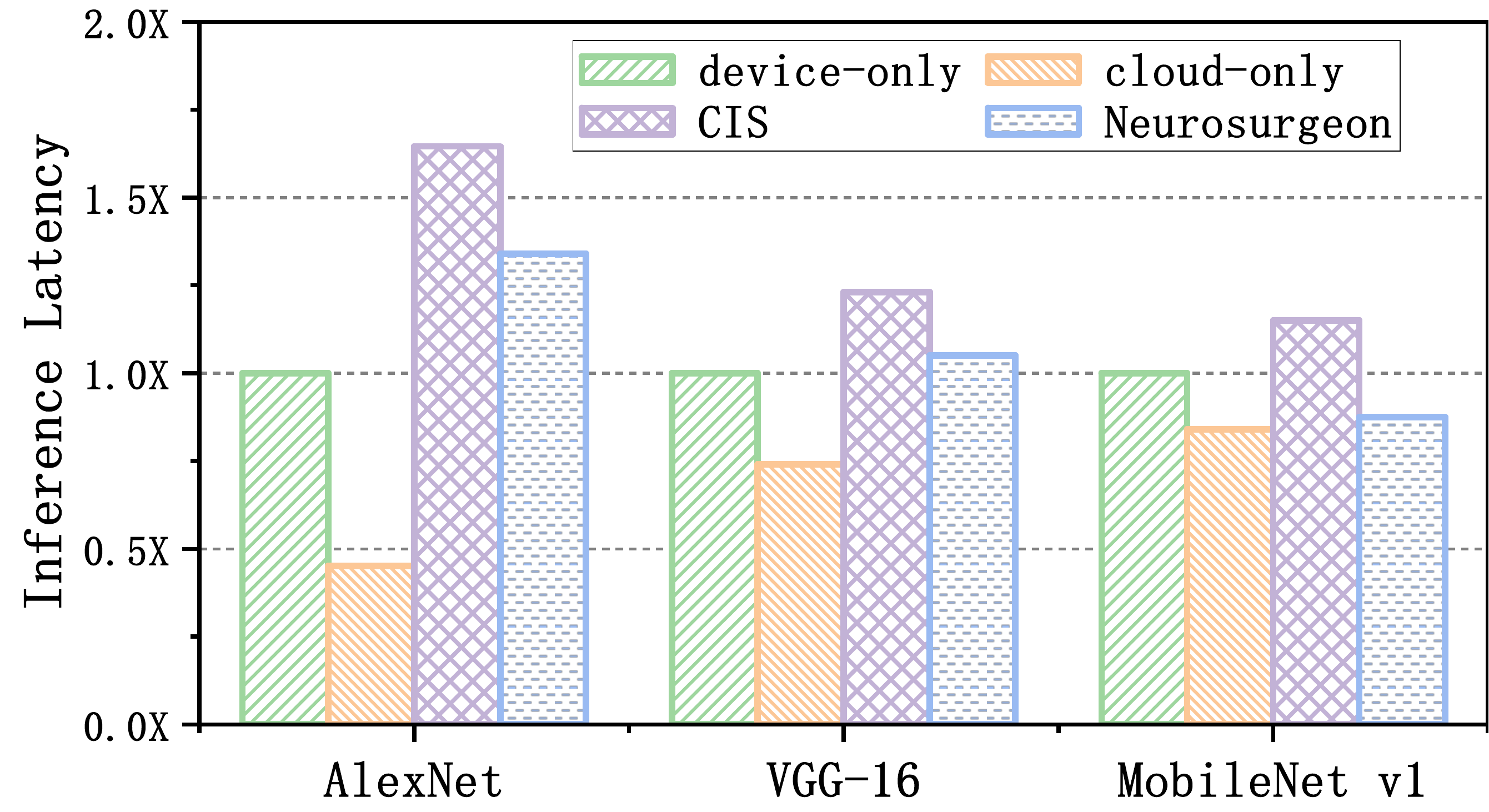}
		\end{minipage}
		\label{fig:5-6-(a)}
	}
    	\subfigure[\textbf{Medium network quality (1.3Mpbs)}]{
    		\begin{minipage}[t]{0.45\textwidth}
   		 	\includegraphics[width=1\textwidth]{./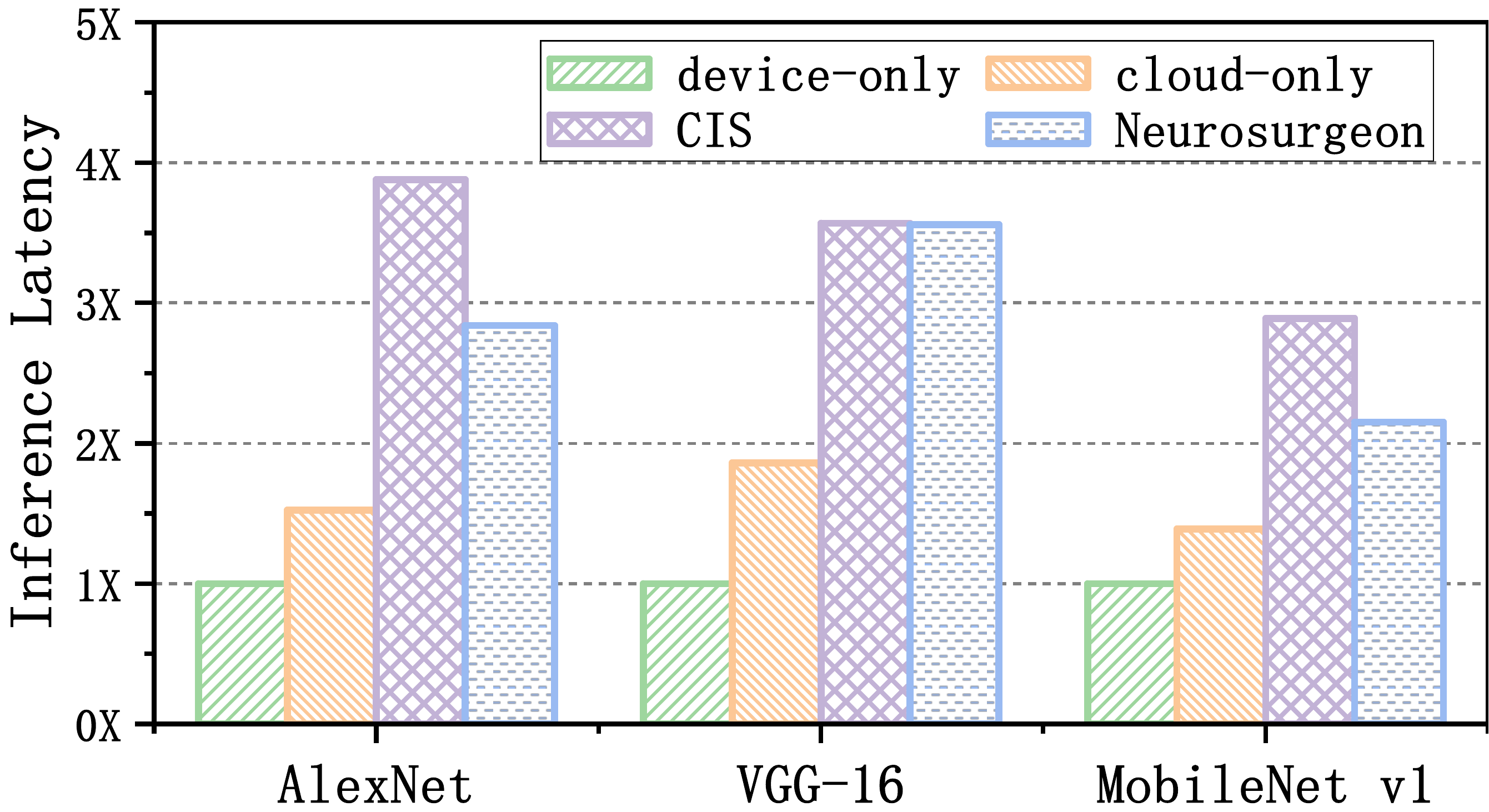}
    		\end{minipage}
		\label{fig:5-6-(b)}
    	}
    	\\
    	\subfigure[\textbf{Good network quality (4Mpbs)}]{
		\begin{minipage}[t]{0.45\textwidth}
			\includegraphics[width=1\textwidth]{./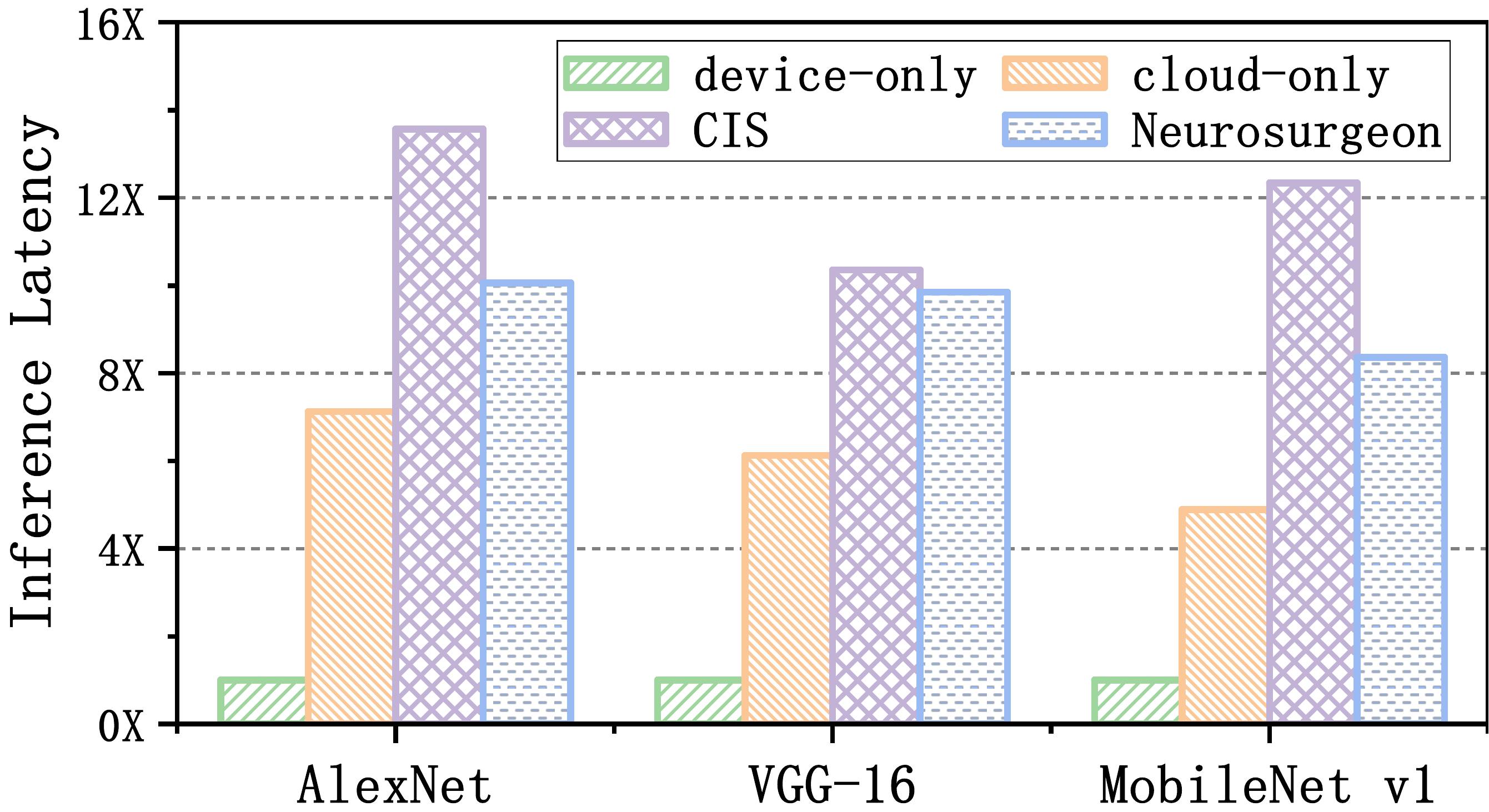}
		\end{minipage}
		\label{fig:5-6-(a)}
	}
    	\subfigure[\textbf{Excellence network quality (15Mpbs)}]{
    		\begin{minipage}[t]{0.45\textwidth}
   		 	\includegraphics[width=1\textwidth]{./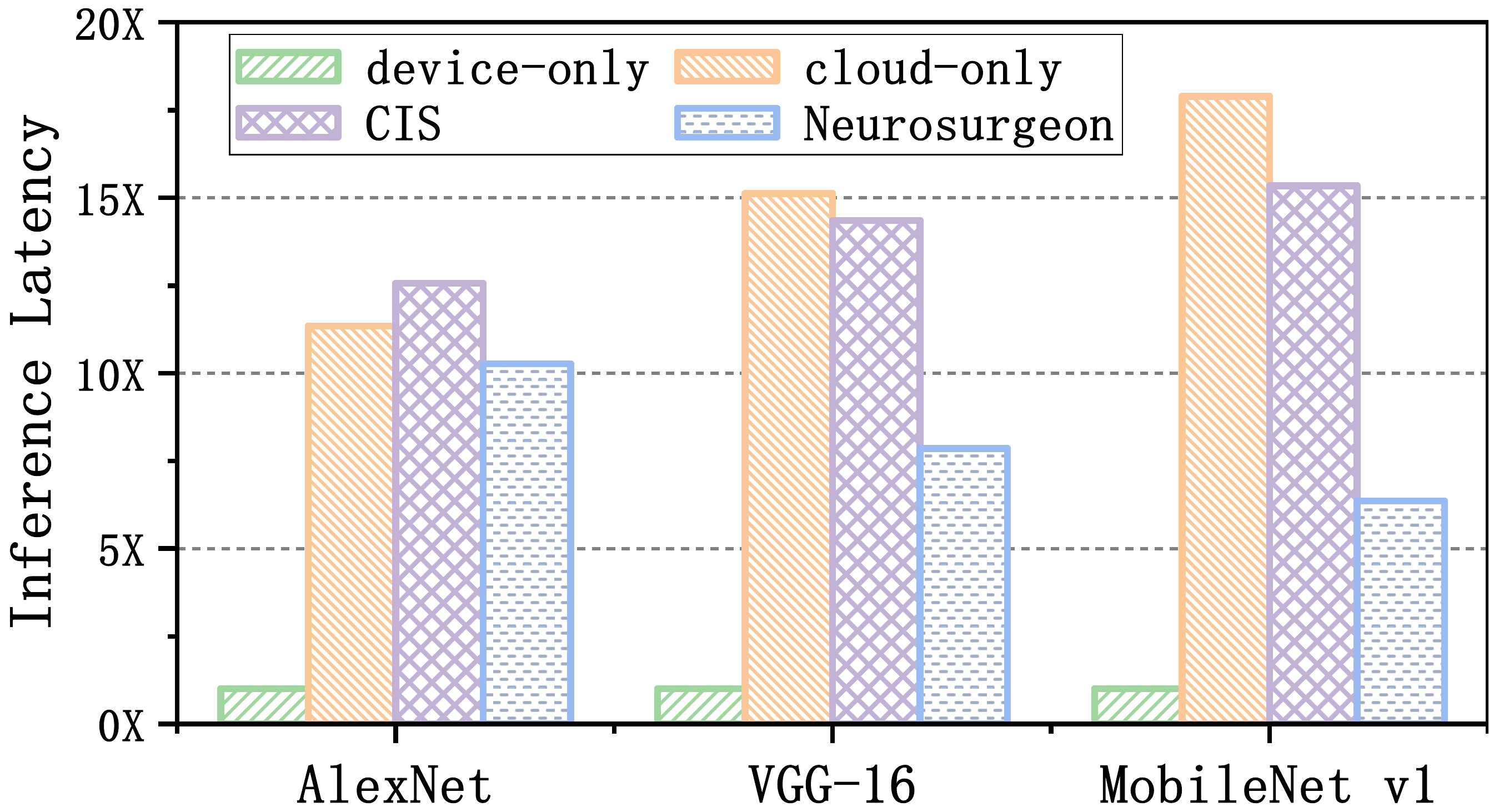}
    		\end{minipage}
		\label{fig:5-6-(d)}
    	}
	\caption{Comparison of inference latency acceleration achieved by CIS with different baseline methods under different network quality conditions}
	\label{fig6}
\end{figure}

\subsection{Defensive performance of white-box reconstruction attacks}
We utilize the regularized Maximum Likelihood Estimation (rMLE)-based white-box reconstruction attack (WRA) \cite{he2020attacking} to verify the defense performance of the proposed Collaborative-DP. Specifically, given the intermediate output results and the inference network and parameters $f^{device}_{\Theta_l}\left(x\right)$, the similarity of the reconstructed input $\hat{x}$ to the original input $x$ (indicating the posterior information observed by the adversary from the intermediate results) is measured by the Euclidean distance (ED), and the total variation ( TV) to represent the prior information of the original input, as follows:
\begin{align}
\label{eq-5-14}
\hat{x}&=\arg \min_x ED\left ( x,\hat{x}\right )+\lambda TV\left ( x\right )\\ \notag
&=\arg \min_x \left \|f^{device}_{\Theta_l}\left(x\right)-f^{device}_{\Theta_l}\left(\hat{x}\right) \right \|_2^2+\lambda \sum_{i,j}\left ( \left |\hat{x}_{i+1,j}-\hat{x}_{i,j} \right |^2+\left |\hat{x}_{i,j}-\hat{x}_{i,j+1} \right |^2\right )^{\beta/2}
\end{align}
\noindent where $i,j$ denote the pixel locations in the image, $\beta$ is the parameter that controls the image smoothness to avoid drastic changes inside the image, and $\lambda$ is the parameter that balances ED and TV.

Before further verifying the defense performance of the Collaborative-DP algorithm in CIS against WRA attacks, we give a visualization of the privacy budget allocation based on the rank of feature submaps (as in Eq. \ref{eq-9}) in Collaborative-DP. As shown in Fig. \ref{fig7}, the first row shows the original CIFA-10 input images, and the feature submaps output by the collaborative inference network (VGG-16) at the partition layer sorted by their rank, in that order. Given the total privacy budget $\epsilon$, Collaborative-DP allocates the budget reasonably according to the rank order, and the lower the feature submaps of lower rank contain lower amount of available information, the privacy budget allocated for them is correspondingly lower, and the added noise becomes higher (X-axis direction in Figure \ref{fig7}). As the total privacy budget $\epsilon$ keeps decreasing (in the Y-axis direction in Fig. \ref{fig7}), the noise added to the corresponding individual feature submaps also increases significantly, corresponding to the increasing intensity of privacy. Even so Collaborative-DP's rank-based budget allocation mechanism still ensures that feature submaps with high rank have relatively high availability.

 \begin{figure}
 \centering
  \begin{subfigure}
  \centering
  \includegraphics[width=0.7\linewidth]{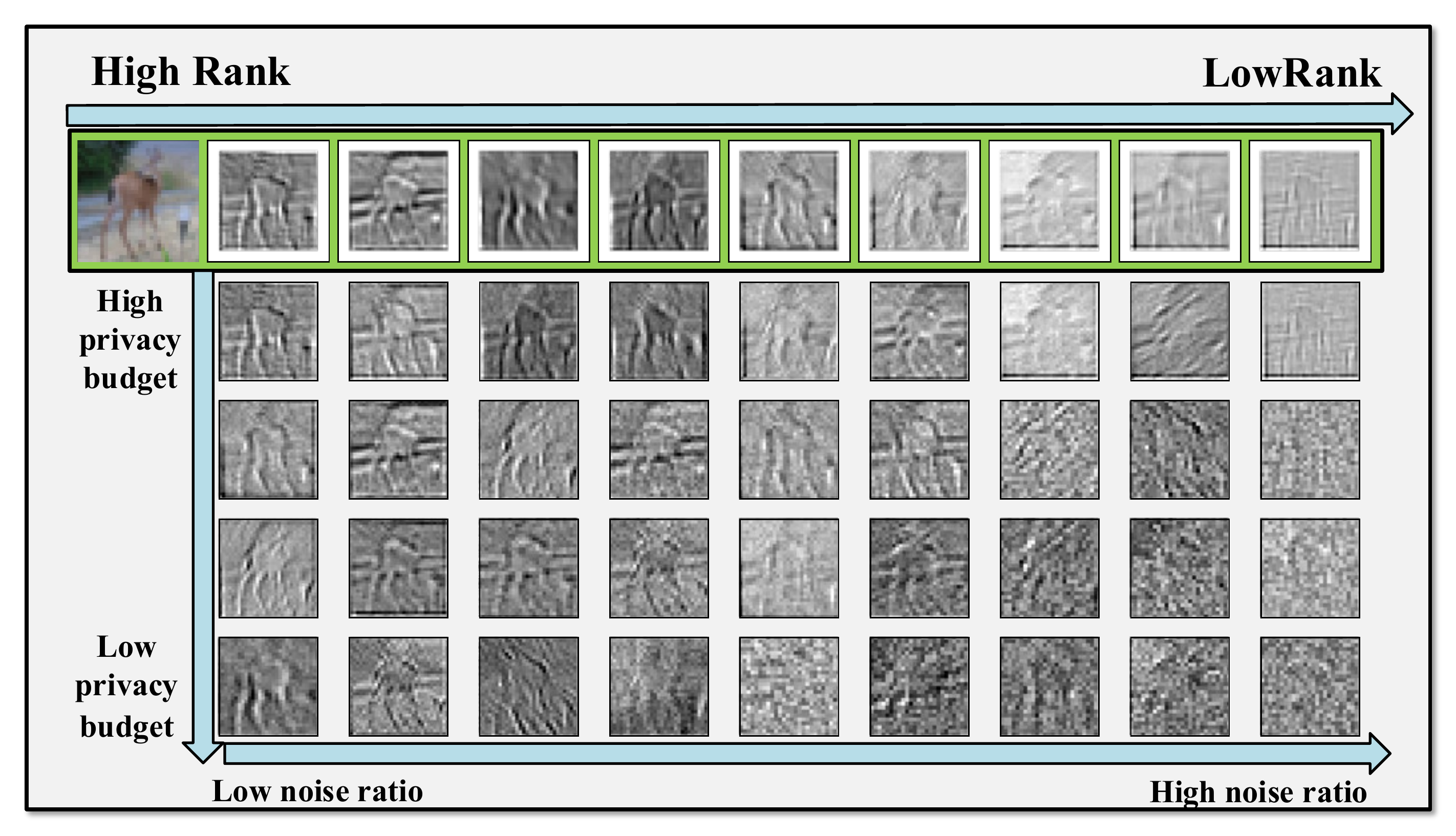}
  \caption{The effect of privacy budget allocation based on feature submaps rank.}
  \label{fig7}
  \end{subfigure}
\end{figure}
Next, the defense effect of Collaborative-DP algorithm in CIS against WRA attack with different privacy protection strengths is given in Fig. \ref{fig8}, where the target model is VGG-16 as well as CIFAR-10 as the target dataset. Analyzed from a visual perspective, the WRA attack is very effective in reconstructing the input image on the edge device almost completely. As the given total privacy budget $\epsilon$ decreases gradually from high, the noise addition of Collaborative-DP to the intermediate output increases and the attack effect of WRA decreases gradually. When $\epsilon \in \left[10,30\right]$, the content of the target data can no longer be clearly distinguished from the visual point of view.
\begin{figure}
 \centering
  \begin{subfigure}
  \centering
  \includegraphics[width=0.7\linewidth]{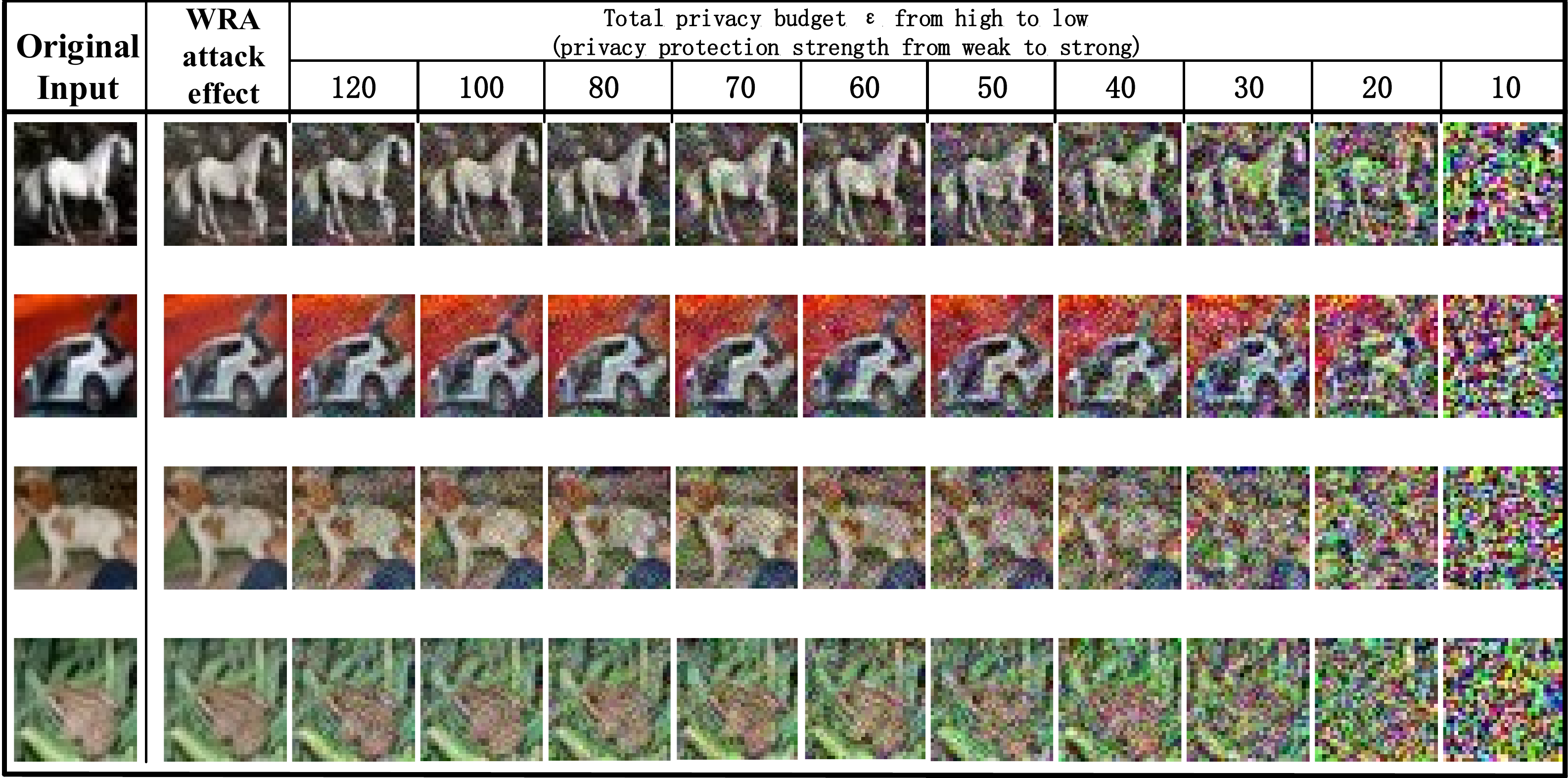}
  \caption{Defense effectiveness of Collaborative-DP against WRA attacks with different privacy protection strengths.}
  \label{fig8}
  \end{subfigure}
\end{figure}

In addition to visualizing the reconstructed images of WRA attacks, MSE, SSIM and PSNR metrics are used in Fig. \ref{fig9} to quantify the effectiveness of defense against WRA attacks. Also, Collaborative-DP is compared with Non-DP without added noise (as the baseline algorithm) and the plain noise addition strategy \cite{ryu2022can} proposed by Ryu et al. (which we named Native-DP for comparison purposes), respectively. For the MSE metric, a lower value means the closer the reconstructed image is to the original image, and the more effective the reconstructed attack is. When $10 < \epsilon < 50$, the MSE value of Native-DP is much lower than that of Collaborative-DP, implying that Collaborative-DP can achieve better defense against reconstruction attacks with a smaller privacy budget. Contrary to the MSE metric, a larger PSNR value indicates that the reconfiguration attack generates a higher quality image and a more effective attack. It can be observed very directly in Fig. \ref{fig9} that the PSNR value of Collaborative-DP is smaller than that of Native-DP in all privacy budget ranges, which also indicates that Collaborative-DP has higher defense performance against WRA attacks with the same privacy budget. Finally, the SSIM metric measures the similarity between the reconstructed image and the original image by structural similarity, and higher values indicate that the reconstructed attack is more effective. In Fig. \ref{fig9}, it can be found that Native-DP is superior to Collaborative-DP algorithm under SSIM metric. A reasonable explanation is that since the SSIM metric reflects the properties of the object structure in the scene from the perspective of image composition, while Collaborative-DP differs from Native-DP by assigning privacy budget based on the size of the feature subgraph rank. The structural properties of some feature subgraphs are preserved to some extent while trade-offs are made between privacy strength and availability.

 \begin{figure}
 \centering
  \begin{subfigure}
  \centering
  \includegraphics[width=0.7\linewidth]{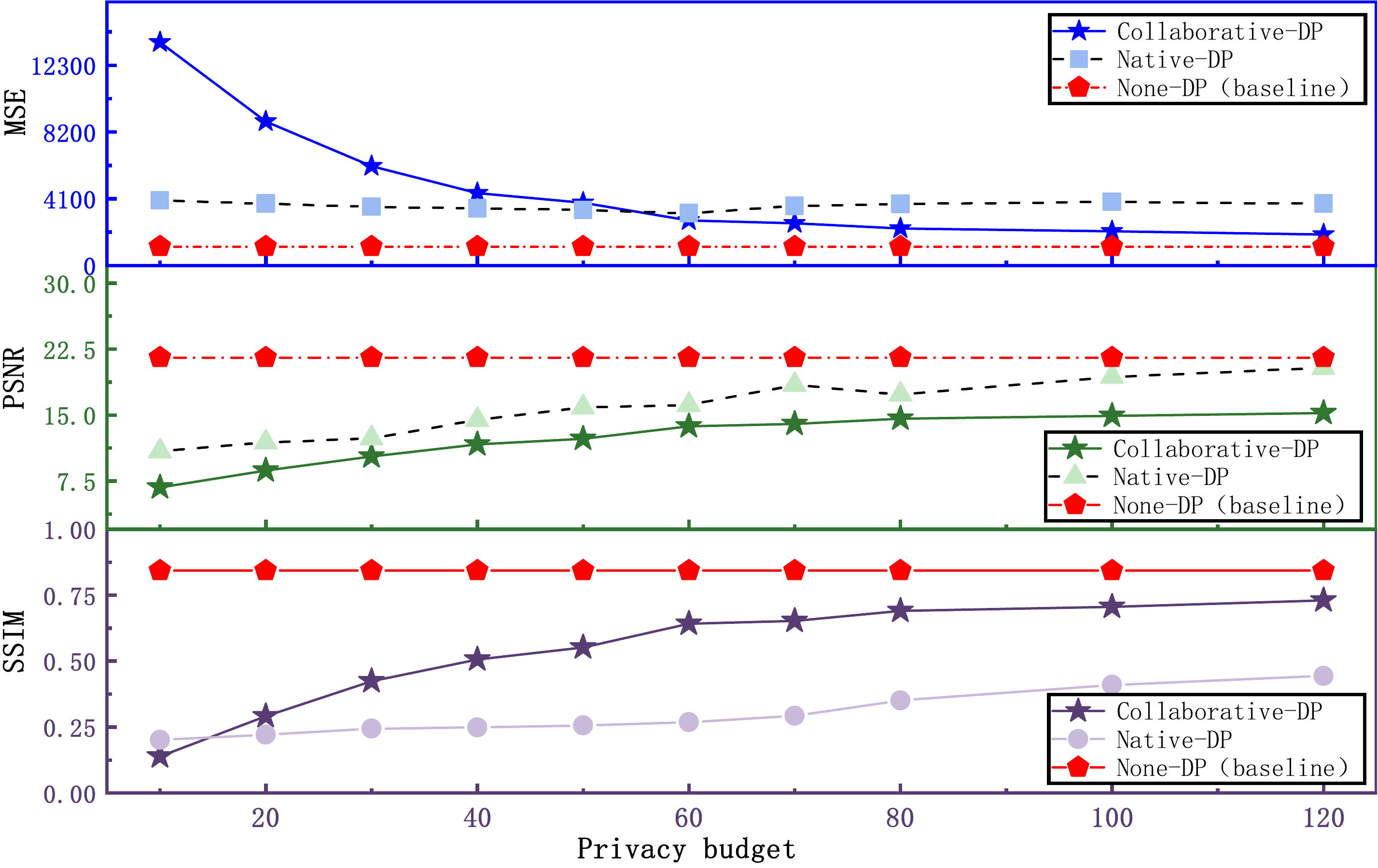}
  \caption{Quantitative defense effects of Collaborative-DP and different baseline methods against white-box reconstruction attacks.}
  \label{fig9}
  \end{subfigure}
\end{figure}

\subsection{Defensive performance of Black-box Inverse-Network Attack}
In this subsection, compared to the white-box reconstruction attack, we consider another black-box attack from an external threat with more restrictive conditions. Specifically, an adversary from external does not have any a priori information about the network and parameters $f_{\Theta}$ of the model, and can only use the cloud-edge collaborative inference service through legitimate query requests. Suppose an external adversary can get the intermediate results computed from the edge device with arbitrary input $x$, and then reconstruct the input data from the intermediate results by training an inverse model $f^{-1}_{\Theta_l}$, as follows:
\begin{equation}
\label{eq-5-14}
f^{-1}_{\Theta_l}=\arg \min_g \frac{1}{m}\sum_{i=1}^{m}\left \| g\left ( f_{\Theta_l}\left ( x_i\right )\right )-x_i\right \|^2
\end{equation}
\noindent where $\left\{x_i\right\}^m_{i=1}$ is the training set generated for the inverse model $g$, and the output $\left\{f_{\Theta_l}\left ( x_i\right ),x_i\right\}$ obtained by legitimate request is used as the sample to train the inverse model $g$ and the parameters $f^{- 1}_{\Theta_l}$. An external adversary can then reconstruct the sensitive data $\hat{x}=f^{-1}_{\Theta_l}\left(v_m\right)$ by the inverse model and the obtained intermediate result $v_m$.

Similarly, the defense effectiveness of Collaborative-DP algorithm in CIS against BINA attack under different privacy protection strengths is given in Fig. \ref{fig10}, where the target model is VGG-16 as well as CIFAR-10 as the target dataset. First, analyzing from the visual perspective, although the data reconstructed by BINA can still be distinguished by the naked eye, the attack is less effective than the white-box attack model of WRA. This stems from the restriction of the attack condition that the external adversary lacks prior knowledge of the model and strongly relies on the generated training set $\left\{f_{\Theta_l}\left ( x_i\right ),x_i\right\}$ to fit the constructed inverse model parameters $f^{-1}_{\Theta_l}$.

However, both Collaborative-DP and Native-DP methods add noise ($\hat{v}'=f_{\Theta_l}\left ( x\right )+noise$) to the intermediate results (also the training set generated by the inverse network), and the overfitted inverse model is very sensitive to noise. Therefore, it can be found in Figure \ref{fig10} that when the privacy budget $\epsilon < 500$, the Collaborative-DP algorithm has been able to resist the BINA attack very well. The different quantitative metrics given in Figs. \ref{fig:5-11} also corroborate the above analysis. Under the condition of $\epsilon = 100$ the Collaborative-DP algorithm makes the MSE metrics fast approaching with 20,000 while the MSE and SSIM evaluation metrics are also much lower than the baseline method of Non-DP without adding noise. Based on the above analysis, the defense performance of Collaborative-DP algorithm against BINA attack is already excellent when the privacy budget $\epsilon < 500$.

 \begin{figure}
 \centering
  \begin{subfigure}
  \centering
  \includegraphics[width=0.7\linewidth]{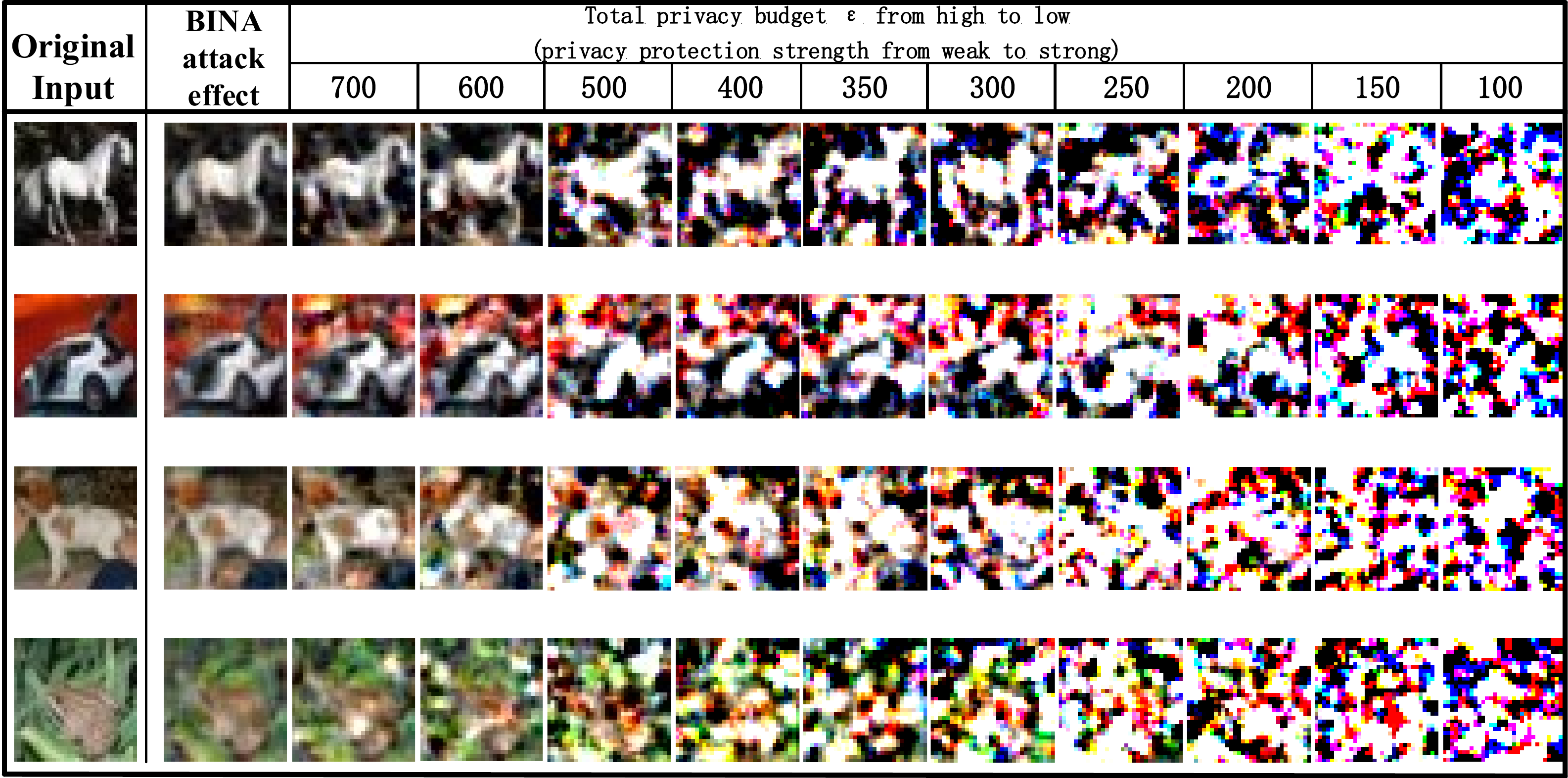}
  \caption{Effectiveness of Collaborative-DP's defense against BINA attacks with different privacy protection strengths.}
  \label{fig10}
  \end{subfigure}
\end{figure}

 \begin{figure}
 \centering
  \begin{subfigure}
  \centering
  \includegraphics[width=0.7\linewidth]{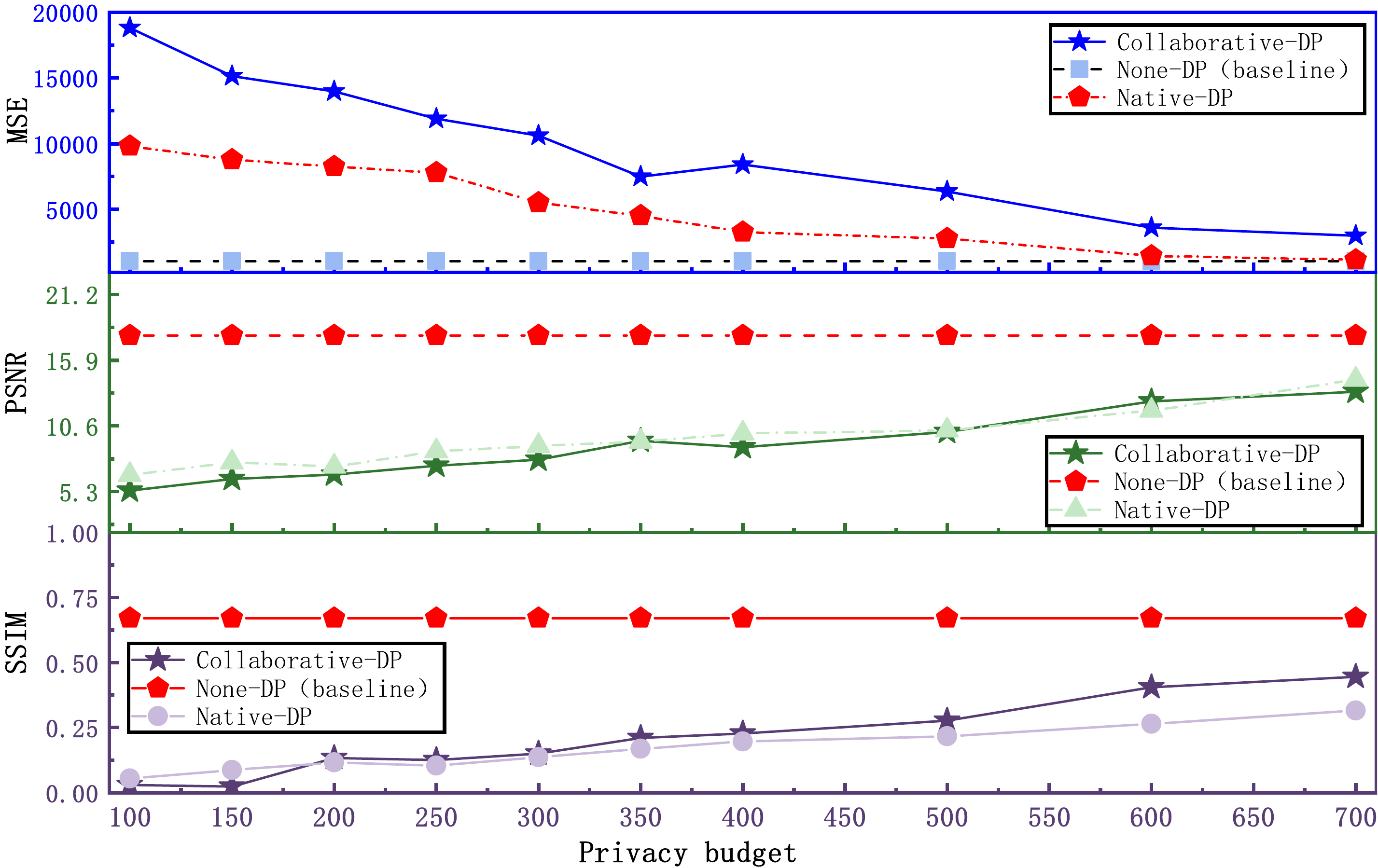}
  \caption{Quantitative Defense Effectiveness of Collaborative-DP and Different Baseline Methods against Black Box Inverse Model Attacks.}
  \label{fig11}
  \end{subfigure}
\end{figure}

\subsection{Analysis of inference accuracy}
The above analysis of the security defense performance of Collaborative-DP facing different attack scenarios is presented. However, blindly achieving high strength privacy protection capability by reducing the privacy budget can heavily sacrifice the accuracy of model inference. Therefore, in this subsection, experiments are conducted to analyze how much impact on the inference accuracy of Collaborative-DP can be caused by different privacy budgets, and how to pick the appropriate privacy budget to achieve the trade-off between privacy and usability.

As can be seen in the following figure \ref{fig12}, Collaborative-DP has a much lower impact on the accuracy of the original network than the Native-DP approach through the refined privacy budget allocation mechanism. When $\epsilon>10$, the prediction accuracy of the model under Collaborative-DP protection has reached 82.64\%, which is slightly lower than the 86.69\% of Native-DP and much higher than the 56.56\% of Native-DP. And the Native-DP mechanism can reach the best balance of privacy and usability only when $\epsilon>30$. As the previous analysis of the defense performance of the attacks, the minimum privacy budgets to effectively defend against WRA attacks and BINA attacks are 30 and 500, respectively, which are much higher than the privacy budgets required for the availability equilibrium point of Collaborative-DP and Native-DP. Thus, the mechanisms of both Collaborative-DP and Native-DP can protect against advanced threats from internal and external while guaranteeing the accuracy of model prediction, and Collaborative-DP requires less privacy budget. Moreover, after $\epsilon>10$ of Collaborative-DP, the prediction accuracy of the network has stabilized and cannot overlap with the accuracy of the original Non-DP no matter how much the privacy budget is scaled up. A reasonable explanation is that a certain degree of accuracy loss is caused by the fact that Collaborative-DP uses a fixed threshold $\boldsymbol{C_m}$ for feature mapping to clipping in order to estimate the global sensitivity.

 \begin{figure}
 \centering
  \begin{subfigure}
  \centering
  \includegraphics[width=0.7\linewidth]{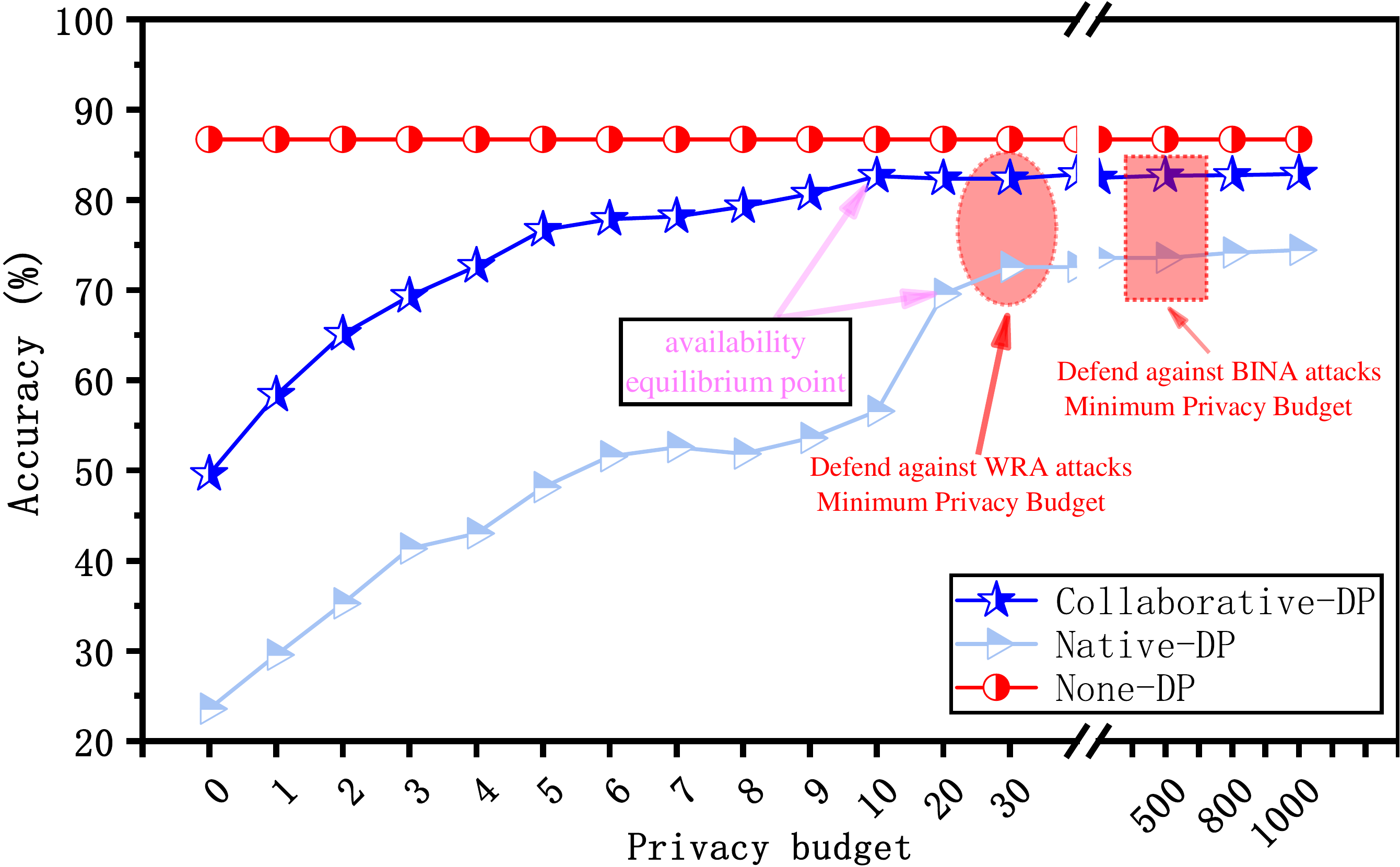}
  \caption{Quantitative Defense Effectiveness of Collaborative-DP and Different Baseline Methods against Black Box Inverse Model Attacks.}
  \label{fig12}
  \end{subfigure}
\end{figure}

\section{Conclusion}
In this paper, a secure privacy inference framework for cloud-edge collaboration is proposed, which supports adaptively partitioning the network according to the dynamically changing network bandwidth and fully releases the computational power of edge devices. Meanwhile, the partition point is selected with full consideration of the amount of information of intermediate results that need to be uploaded, and refined noise is added to them to achieve a differential privacy protection mechanism. Finally, a realistic cloud-edge collaborative inference computing scenario is constructed to evaluate the effectiveness of inference latency and model partitioning on resource-constrained edge devices. Meanwhile, state-of-the-art cloud-edge collaborative inference reconstruction attacks are employed to evaluate the practical usability of the end-to-end privacy-preserving mechanism of CIS.

\section*{Acknowledgments}
This was was supported in part by......

\bibliographystyle{unsrt}  
\bibliography{references}

\end{document}